\newtheorem{observation}{Observation}
\newtheorem{openprob}{Open Problem}
\newcommand{\id}{\mathsf{id}}
\newcommand{\NA}{\mathsf{nact}}
\newcommand{\EA}{\mathsf{eact}}
\newcommand{\DFPC}{\textsc{Depth First Pointer Chasing}}
\newcommand{\PC}{\textsc{Pointer Chasing}}
\newcommand{\SYM}{\textsc{Symmetry}}
\newcommand{\CFF}{\textsc{C4-free\-ness}}
\newcommand{\UpB}{\big(\frac{n}{\log n}\big)^{1/4}}
\newcommand{\repeatthanks}{\textsuperscript{\thefootnote}}
\begin{document}
%%%%%%%%%%%%%%%%%%%%%%%%%%%%%%%%%%%%%

\title{Energy-Efficient Distributed Algorithms for Synchronous Networks\thanks{This work was performed during the visit of the first and last authors to Universidad de Chile, and to Universidad Adolfo Ibañez, Chile.}}
\titlerunning{Energy-Efficient Distributed Algorithms}
% If the paper title is too long for the running head, you can set
% an abbreviated paper title here
%
\author{
Pierre Fraigniaud\inst{1}\thanks{Additional support from ANR project DUCAT (ref. ANR-20-CE48-0006).} 
%\orcidID{0000-0003-4534-4803}
\and
Pedro Montealegre\inst{2}\thanks{Additional support from ANID  via PIA/Apoyo a Centros 
Cientificos y Tecnológicos de Excelencia AFB 170001, Fondecyt 1220142 and Fondecyt 1230599.}
\and
Ivan Rapaport\inst{3}\repeatthanks
\and \\
Ioan Todinca\inst{4}
}
\authorrunning{P. Fraigniaud, P. Montealegre, I. Rapaport,  I. Todinca}
% First names are abbreviated in the running head.
% If there are more than two authors, 'et al.' is used.
%
\institute{
Institut de Recherche en Informatique Fondamentale (IRIF), CNRS and Université Paris Cité, Paris, France.
\email{pierre.fraigniaud@irif.fr}
%\url{http://www.irif.fr/\~{}pierref}
\and
Facultad de Ingenier\'ia y Ciencias, Universidad Adolfo Ib\'a\~nez, Santiago, Chile
\email{p.montealegre@uai.cl}
%\url{http://www.springer.com/gp/computer-science/lncs} 
\and
Departamento de Ingeniería Matemática - Centro de Modelamiento Matemático (UMI 2807 CNRS), Universidad de Chile, Santiago, Chile
\email{rapaport@dim.uchile.cl}
\and
Laboratoire d'informatique fondamentale d'Orléans (LIFO), Université d'Orléans, Orléans, France
\email{Ioan.Todinca@univ-orleans.fr}
}
\maketitle              % typeset the header of the contribution
\begin{abstract}
We study the design of energy-efficient algorithms for the LOCAL and CONGEST models. Specifically, as a measure of complexity, we consider the maximum, taken over all the edges, or over all the nodes, of the number of rounds at which an edge, or a node, is active in the algorithm. We first show that every Turing-computable problem has a CONGEST algorithm with constant node-activation complexity, and therefore constant edge-activation complexity as well. That is, every node (resp., edge) is active in sending (resp., transmitting) messages for only $O(1)$ rounds  during the whole execution of the algorithm. In other words, every Turing-computable problem can be solved by an algorithm consuming the least possible energy. In the LOCAL model, the same holds obviously, but with the additional feature that the algorithm runs in $O(\mbox{poly}(n))$ rounds in $n$-node networks. However, we show that insisting on algorithms running in $O(\mbox{poly}(n))$ rounds in the CONGEST model comes with a severe cost in terms of energy. Namely, there are problems requiring $\Omega(\mbox{poly}(n))$ edge-activations (and thus $\Omega(\mbox{poly}(n))$ node-activations as well) in the CONGEST model  whenever solved by algorithms bounded to run in $O(\mbox{poly}(n))$ rounds. Finally, we demonstrate the existence of a sharp separation between the edge-activation complexity and the node-activation complexity in the CONGEST model, for algorithms bounded to run in $O(\mbox{poly}(n))$ rounds. Specifically,  under this constraint, there is a problem with $O(1)$ edge-activation complexity but $\tilde{\Omega}(n^{1/4})$ node-activation complexity. 

\keywords{Synchronous distributed algorithms \and LOCAL and CONGEST  models \and Energy efficiency.}
\end{abstract}

%%%%%%%%%%%%%%%%%%%%%%%%%%%%%%%%%%%%%
\section{Introduction}
%%%%%%%%%%%%%%%%%%%%%%%%%%%%%%%%%%%%%

%-----------------------------------
\subsection{Objective}
%-----------------------------------

Designing computing environments consuming a limited amount of energy while achieving computationally complex tasks is an objective of utmost importance, especially in distributed systems involving a large number of computing entities. In this paper, we aim at designing energy-efficient algorithms for the standard LOCAL and CONGEST models of distributed computing in networks~\cite{peleg2000}. Both models assume a network modeled as an $n$-node graph $G=(V,E)$, where each node is provided with an identifier, i.e., an integer that is unique in the network, which can be stored on $O(\log n)$ bits. All nodes are assumed to run the same algorithm, and computation proceeds as a series of synchronous rounds (all nodes start simultaneously at round~1). During a round, every node sends a message to each of its neighbors, receives the messages sent by its neighbors, and performs some individual computation. The two models LOCAL and CONGEST differ only in the amount of information that can be exchanged between nodes at each round. 

The LOCAL model does not bound the size of the messages, whereas the CONGEST model allows only messages of size $O(\log n)$ bits. Initially, every node~$v\in V$ knows solely its identifier~$\id(v)$, an upper bound of the number~$n$ of nodes, which is assumed to be polynomial in~$n$ and to be the same for all nodes, plus possibly some input bit-string $x(v)$ depending on the task to be solved by the nodes. In this paper, we denote by $N$ the maximum between the largest identifier and the upper bound on~$n$ given to all nodes. Hence $N=O(\mbox{poly}(n))$, and is supposed to be known by all nodes. After a certain number of rounds, every node outputs a bit-string~$y(v)$, where the correctness of the collection of outputs $y=\{y(v):v\in V\}$ is defined with respect to the specification of the task to be solved, and may depend on the collection of inputs $x=\{x(v):v\in V\}$ given to the nodes, as well as on the graph~$G$ (but not on the identifiers assigned to the nodes, nor on the upper bound~$N$). 

\paragraph{Activation complexity.} 

We measure the energy consumption of an algorithm~$A$ by counting how many times each node and each edge is activated during the execution of the algorithm. More specifically, a node~$v$ (resp., an edge~$e$) is said to be \emph{active} at a given round~$r$ if $v$ is sending a message to at least one of its neighbors at round~$r$ (resp., if a message traverses $e$ at round~$r$). The \emph{node-activation} and the \emph{edge-activation} of an algorithm~$A$ running in a graph $G=(V,E)$ are respectively  defined as 
\[
\NA(A):=\max_{v\in V}\#\mbox{activation}(v),
\;\; \mbox{and} \;\;
\EA(A):=\max_{e\in E}\#\mbox{activation}(e),
\]
where $\#\mbox{activation}(v)$ (resp., $\#\mbox{activation}(e)$) denotes the number of rounds during which node~$v$ (resp., edge~$e$) is active along the execution of the algorithm~$A$.
By definition, we have that, in any graph of maximum degree~$\Delta$, 
\begin{equation}\label{eq:ineq-activation}
\EA(A)\leq 2\cdot \NA(A), ~\textrm{ and }~ \NA(A) \leq \Delta\cdot\EA(A).
\end{equation}

\paragraph{Objective.} 

Our goal is to design \emph{frugal} algorithms, that is, algorithms with \emph{constant} node-activation, or to the least \emph{constant} edge-activation, independent of the number~$n$ of nodes and of the number~$m$ of edges. Indeed, such algorithms can be viewed as consuming the least possible energy for solving a given task. Moreover, even if the energy requirement for solving the task naturally grows with the number of components (nodes or edges) of the network, it grows \emph{linearly} with this number whenever using frugal algorithms. We refer to \emph{node-frugality} or \emph{edge-frugality} depending on whether we focus on node-activation or edge-activation, respectively. 

%-----------------------------------
\subsection{Our Results}
%-----------------------------------

We first show that every Turing-computable problem\footnote{A problem is Turing-computable if there exists a Turing machine that, given any graph with identifiers and inputs assigned to the nodes, computes the output of each node in the graph.} can thus be solved by a node-frugal algorithm in the LOCAL model as well as in the CONGEST model. It follows from Eq.~\ref{eq:ineq-activation} that every Turing-computable problem can be solved by an edge-frugal algorithm in both models. In other words, every problem can be solved by an energy-efficient distributed algorithm. One important question remains: what is the round complexity of frugal algorithms? 

In the LOCAL model, our node-frugal algorithms run in $O(\mbox{poly}(n))$ rounds. However, they may run in exponentially many rounds in the CONGEST model. We show that this cannot be avoided. Indeed, even if many symmetry-breaking problems such as computing a maximal-independent set ({\sc mis}) and computing a $(\Delta+1)$-coloring can be solved by a node-frugal algorithm performing in $O(\mbox{poly}(n))$ rounds, we show that there exist problems (e.g., deciding $C_4$-freeness or deciding the presence of symmetries in the graph) that cannot be solved in $O(\mbox{poly}(n))$ rounds in the CONGEST model by any edge-frugal algorithm. 

Finally, we discuss the relation between node-activation complexity and edge-activation complexity. We show that the bounds given by Eq.~\ref{eq:ineq-activation} are essentially the best that can be achieved in general. Precisely, we identify a problem, namely \DFPC{} (\textsc{dfpc}), which has edge-activation complexity $O(1)$ for all graphs with an algorithm running in $O(\mbox{poly}(n))$ rounds in the CONGEST model, but satisfying that, for every $\Delta =O\left(\UpB\right)$, its node-activation complexity in graphs with maximum degree $\Delta$ is $\Omega(\Delta)$ whenever solved by an algorithm bounded to run in $O(\mbox{poly}(n))$ rounds in the CONGEST model. In particular, \DFPC{} has constant edge-activation complexity but node-activation complexity $\tilde{\Omega}(n^{1/4})$ in the CONGEST model (for $O(\mbox{poly}(n))$-round algorithms). 

Our main results are summarized in Table~\ref{tab:summary}. 

\begin{table}
\begin{center} 
\begin{tabular}{l|l|l|l|} %\small
& \hspace{1cm} \textbf{Awakeness}
& \hspace{.1cm} \textbf{Node-Activation}
& \hspace{.3cm} \textbf{Edge-Activation} \\
\hline
LOCAL 
& $\bullet \;\forall \Pi, \Pi \in O(\log n)$ with  
& $\bullet \;\forall \Pi, \Pi \in O(1)$ with 
& $\bullet \; \forall \Pi, \Pi \in O(1)$ with\\
& $\;\; O(\mbox{poly}(n))$ rounds \cite{BarenboimM21} 
& $\;\; O(\mbox{poly}(n))$ rounds 
& $\;\; O(\mbox{poly}(n))$ rounds\\
& $\bullet \; \mbox{\sc st}\in \Omega(\log n)$ \cite{BarenboimM21} 
& & \\
\hline 
CONGEST 
& $\bullet \; \mbox{\sc mis}\in O(\mbox{polyloglog}(n))$  
& $\bullet \; \forall \Pi, \Pi\in O(1)$ 
& $\bullet \; \forall \Pi, \Pi \in O(1)$\\
& \;\; with $O(\mbox{polylog}(n))$  
& $\bullet \;\mbox{poly}(n)$ rounds 
& $\bullet \; \mbox{poly}(n) $ rounds \\
& \;\; rounds \cite{DufoulonMP22} (randomized)
& $\;\; \Rightarrow \exists \Pi \in \Omega(\mbox{poly}(n)) $ 
& $\;\; \Rightarrow \exists \Pi \in \Omega(\mbox{poly}(n)) $ \\
& $\bullet \; \mbox{\sc mst} \in O(\log n)$
& $\bullet \;\mbox{poly}(n)$ rounds 
& $\bullet \;\mbox{\sc dfpc}\in O(1)$ with \\
& \;\; with $O(\mbox{poly}(n))$
& $\;\; \Rightarrow \mbox{\sc dfpc}\in \tilde{\Omega}(n^{1/4})$
& $\; \; O(\mbox{poly}(n))$ rounds \\
& \;\; rounds \cite{AugustineMP22}
&
& $\bullet \;\Pi\in \mbox{FO and}\; \Delta=O(1) $\\
&
&
& $\;\; \Rightarrow \Pi\in O(1) \; \mbox{with}$ \\
&
&
& $\;\; O(\mbox{poly}(n))$ rounds~\cite{GrumbachW09} \\
\hline
\end{tabular}
\end{center}
\caption{\sl Summary of our results where, for a problem $\Pi$, $\Pi\in O(f(n))$ means that the corresponding complexity of $\Pi$ is $O(f(n))$ (same shortcut for $\Omega$).}
\label{tab:summary}
\end{table}

\paragraph{Our Techniques.} 

Our upper bounds are mostly based on similar types of upper bounds techniques used in the sleeping model~\cite{BarenboimM21,ChatterjeeGP20} (cf. Section~\ref{subsec:related-work}), based on constructing spanning trees along with
gathered and broadcasted information. However, the models considered in this paper do not suffer from the same limitations as the sleeping model (cf. Section~\ref{sec:preliminaries}), and thus one can achieve activation complexity $O(1)$ in scenarios where the sleeping model limits the awake complexity to $\Omega(\log n)$. 

Our lower bounds for CONGEST are based on reductions from 2-party communication complexity. However, as opposed to the standard CONGEST model in which the simulation of a distributed algorithm by two players is straightforward (each player performs the rounds sequentially, one by one, and exchanges the messages sent across the cut between the two subsets of nodes handled by the players at each round), the simulation of distributed algorithms in which only subsets of nodes are active at various rounds requires more care. This is especially the case when the simulation must not only control the amount of information exchanged between these players, but also the number of communication steps performed by the two players. Indeed, there are 2-party communication complexity problems that are hard for $k$ steps, but trivial for $k+1$ steps~\cite{NisanW93}, and some of our lower bounds rely on this fact.

%-----------------------------------
\subsection{Related Work}
\label{subsec:related-work}
%-----------------------------------

The study of frugal algorithms has been initiated in~\cite{GrumbachW09}, which focuses on the edge-frugality in the CONGEST model. It is shown that for {\emph{bounded-degree graphs}}, any problem expressible in first-order logic (e.g., $C_4$-freeness) can be solved by an edge-frugal algorithm running in $O(\mbox{poly}(n))$ rounds in the CONGEST model. This also holds for planar graphs with no bounds on the maximum degree, whenever the nodes are provided with their local combinatorial embedding. Our results show that these statements cannot be extended to arbitrary graphs as we prove that any algorithm solving $C_4$-freeness in $O(\mbox{poly}(n))$ rounds in the CONGEST model has edge-activation $\tilde{\Omega}(\sqrt{n})$. 

More generally, the study of energy-efficient algorithms in the context of distributed computing in networks has been previously considered in the framework of the \emph{sleeping} model, introduced in~\cite{ChatterjeeGP20}. This model assumes that nodes can be in two states: \emph{awake} and \emph{asleep}. A node in the awake state performs as in the LOCAL and CONGEST models, but may also decide to fall asleep, for a prescribed amount of rounds, controlled by each node, and depending on the algorithm executed at the nodes. A sleeping node is totally inactive in the sense that it does not send messages, it cannot receive messages (i.e., if a message is sent to a sleeping node by an awake neighbor, then the message is lost), and it is computationally idle  (apart from counting rounds). The main measure of interest in the sleeping model is the \emph{awake complexity}, defined as the maximum, taken over all nodes, of the number of rounds at which each node is awake during the execution of the algorithm. 

In the LOCAL model, it is known~\cite{BarenboimM21} that all problems have awake complexity $O(\log n)$, using algorithms running in $O(\mbox{poly}(n))$ rounds. This bound is tight in the sense that there are problems (e.g., spanning tree construction) with awake complexity $\Omega(\log n)$ \cite{BarenboimM21,ChangDHHLP18}. 

In the CONGEST model, It was first shown~\cite{ChatterjeeGP20} that {\sc mis} has constant \emph{average} awake complexity, thanks to a \emph{randomized} algorithm running in $O(\mbox{polylog}(n))$ rounds. The round complexity was improved in~\cite{GhaffariP22} with a \emph{randomized} algorithm running in $O(\log n)$ rounds. The (worst-case) awake complexity of {\sc mis} was proved to be $O(\log\log n)$ using a  \emph{randomized} Monte-Carlo algorithm running in $O(\mbox{poly}(n))$ rounds~\cite{DufoulonMP22}. This (randomized) round complexity can even be reduced to $O(\log^3n \cdot \log\log n \cdot \log^\star n)$, at the cost of slightly increasing the awake complexity to $O(\log\log n \cdot \log^\star n)$. {\sc mst} has also been considered, and it was proved~\cite{AugustineMP22} that its (worst-case) awake complexity is $O(\log n)$ thanks to a (deterministic) algorithm running in $O(\mbox{poly}(n))$ rounds. The upper bound on the awake complexity of {\sc mst} is tight, thank to the lower bound for spanning tree ({\sc st}) in~\cite{BarenboimM21}. 

%%%%%%%%%%%%%%%%%%%%%%%%%%%%%%%%%%%%%
\section{Preliminaries}
\label{sec:preliminaries}
%%%%%%%%%%%%%%%%%%%%%%%%%%%%%%%%%%%%%

In this section, we illustrate the difference between the standard LOCAL and CONGEST models, their sleeping variants, and our node- and edge-activation variants. Fig.~\ref{fig:1}(a) displays the automaton corresponding to the behavior of a node in the standard models. A node is either \emph{active}~(A) or \emph{terminated}~(T). At each clock tick (i.e., round) a node is subject to message events corresponding to sending and receiving messages to/from neighbors. A node remains active until it terminates. 

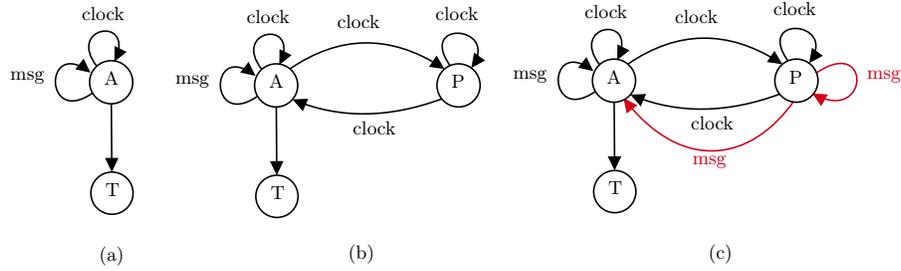
\begin{figure}[!h]
    \centering
    \scalebox{0.8}{
    \tikzset{every picture/.style={line width=0.75pt}} %set default line width to 0.75pt        

\begin{tikzpicture}[x=0.75pt,y=0.75pt,yscale=-1,xscale=1]
%uncomment if require: \path (0,423); %set diagram left start at 0, and has height of 423

%Curve Lines [id:da12459617916225096] 
\draw    (384,180) .. controls (364.62,155.02) and (406.85,148.15) .. (396.15,174.46) ;
\draw [shift={(395,177)}, rotate = 296.39] [fill={rgb, 255:red, 0; green, 0; blue, 0 }  ][line width=0.08]  [draw opacity=0] (8.93,-4.29) -- (0,0) -- (8.93,4.29) -- cycle    ;
%Curve Lines [id:da8932040251782036] 
\draw    (503,178) .. controls (481.68,148.65) and (532.32,152.01) .. (517.56,176.65) ;
\draw [shift={(516,179)}, rotate = 306.11] [fill={rgb, 255:red, 0; green, 0; blue, 0 }  ][line width=0.08]  [draw opacity=0] (8.93,-4.29) -- (0,0) -- (8.93,4.29) -- cycle    ;
%Curve Lines [id:da23152169439522208] 
\draw    (381,197) .. controls (350.79,217.71) and (350.51,163.57) .. (377.84,182.4) ;
\draw [shift={(380,184)}, rotate = 218.26] [fill={rgb, 255:red, 0; green, 0; blue, 0 }  ][line width=0.08]  [draw opacity=0] (8.93,-4.29) -- (0,0) -- (8.93,4.29) -- cycle    ;
%Curve Lines [id:da8354605658407865] 
\draw [color={rgb, 255:red, 208; green, 2; blue, 27 }  ,draw opacity=1 ]   (520.71,199.44) .. controls (555.04,216.25) and (555.07,158.96) .. (518,187) ;
\draw [shift={(518,198)}, rotate = 29.85] [fill={rgb, 255:red, 208; green, 2; blue, 27 }  ,fill opacity=1 ][line width=0.08]  [draw opacity=0] (8.93,-4.29) -- (0,0) -- (8.93,4.29) -- cycle    ;
%Curve Lines [id:da2756551033865027] 
\draw [color={rgb, 255:red, 208; green, 2; blue, 27 }  ,draw opacity=1 ]   (401.11,203.71) .. controls (435.79,247.07) and (479.09,242.42) .. (507,202) ;
\draw [shift={(399,201)}, rotate = 52.87] [fill={rgb, 255:red, 208; green, 2; blue, 27 }  ,fill opacity=1 ][line width=0.08]  [draw opacity=0] (8.93,-4.29) -- (0,0) -- (8.93,4.29) -- cycle    ;
%Curve Lines [id:da3689480215876072] 
\draw    (171,183) .. controls (151.62,158.02) and (193.85,151.15) .. (183.15,177.46) ;
\draw [shift={(182,180)}, rotate = 296.39] [fill={rgb, 255:red, 0; green, 0; blue, 0 }  ][line width=0.08]  [draw opacity=0] (8.93,-4.29) -- (0,0) -- (8.93,4.29) -- cycle    ;
%Curve Lines [id:da7632034687148791] 
\draw    (290,181) .. controls (268.68,151.65) and (319.32,155.01) .. (304.56,179.65) ;
\draw [shift={(303,182)}, rotate = 306.11] [fill={rgb, 255:red, 0; green, 0; blue, 0 }  ][line width=0.08]  [draw opacity=0] (8.93,-4.29) -- (0,0) -- (8.93,4.29) -- cycle    ;
%Curve Lines [id:da43916332550677084] 
\draw    (168,200) .. controls (137.79,220.71) and (137.51,166.57) .. (164.84,185.4) ;
\draw [shift={(167,187)}, rotate = 218.26] [fill={rgb, 255:red, 0; green, 0; blue, 0 }  ][line width=0.08]  [draw opacity=0] (8.93,-4.29) -- (0,0) -- (8.93,4.29) -- cycle    ;
%Curve Lines [id:da43523077051573156] 
\draw    (67,181) .. controls (47.62,156.02) and (89.85,149.15) .. (79.15,175.46) ;
\draw [shift={(78,178)}, rotate = 296.39] [fill={rgb, 255:red, 0; green, 0; blue, 0 }  ][line width=0.08]  [draw opacity=0] (8.93,-4.29) -- (0,0) -- (8.93,4.29) -- cycle    ;
%Curve Lines [id:da10494954192158623] 
\draw    (64,198) .. controls (33.79,218.71) and (33.51,164.57) .. (60.84,183.4) ;
\draw [shift={(63,185)}, rotate = 218.26] [fill={rgb, 255:red, 0; green, 0; blue, 0 }  ][line width=0.08]  [draw opacity=0] (8.93,-4.29) -- (0,0) -- (8.93,4.29) -- cycle    ;

% Text Node
\draw  [fill={rgb, 255:red, 255; green, 255; blue, 255 }  ,fill opacity=1 ]  (393, 190) circle [x radius= 13.6, y radius= 13.6]   ;
\draw (387,182) node [anchor=north west][inner sep=0.75pt]   [align=left] {A};
% Text Node
\draw  [fill={rgb, 255:red, 255; green, 255; blue, 255 }  ,fill opacity=1 ]  (508, 190) circle [x radius= 13.6, y radius= 13.6]   ;
\draw (502,182) node [anchor=north west][inner sep=0.75pt]   [align=left] {P};
% Text Node
\draw  [fill={rgb, 255:red, 255; green, 255; blue, 255 }  ,fill opacity=1 ]  (393.5, 260) circle [x radius= 13.31, y radius= 13.31]   ;
\draw (388,252) node [anchor=north west][inner sep=0.75pt]   [align=left] {T};
% Text Node
\draw (373,141) node [anchor=north west][inner sep=0.75pt]  [font=\small] [align=left] {clock};
% Text Node
\draw (441,236) node [anchor=north west][inner sep=0.75pt]  [font=\small,color={rgb, 255:red, 208; green, 2; blue, 27 }  ,opacity=1 ] [align=left] {msg};
% Text Node
\draw (551,182) node [anchor=north west][inner sep=0.75pt]  [font=\small,color={rgb, 255:red, 208; green, 2; blue, 27 }  ,opacity=1 ] [align=left] {msg};
% Text Node
\draw (328,182) node [anchor=north west][inner sep=0.75pt]  [font=\small] [align=left] {msg};
% Text Node
\draw (440,211) node [anchor=north west][inner sep=0.75pt]  [font=\small] [align=left] {clock};
% Text Node
\draw (492,139) node [anchor=north west][inner sep=0.75pt]  [font=\small] [align=left] {clock};
% Text Node
\draw (430,144) node [anchor=north west][inner sep=0.75pt]  [font=\small] [align=left] {clock};
% Text Node
\draw  [fill={rgb, 255:red, 255; green, 255; blue, 255 }  ,fill opacity=1 ]  (180, 193) circle [x radius= 13.6, y radius= 13.6]   ;
\draw (174,185) node [anchor=north west][inner sep=0.75pt]   [align=left] {A};
% Text Node
\draw  [fill={rgb, 255:red, 255; green, 255; blue, 255 }  ,fill opacity=1 ]  (295, 193) circle [x radius= 13.6, y radius= 13.6]   ;
\draw (289,185) node [anchor=north west][inner sep=0.75pt]   [align=left] {P};
% Text Node
\draw  [fill={rgb, 255:red, 255; green, 255; blue, 255 }  ,fill opacity=1 ]  (180.5, 263) circle [x radius= 13.31, y radius= 13.31]   ;
\draw (175,255) node [anchor=north west][inner sep=0.75pt]   [align=left] {T};
% Text Node
\draw (160,144) node [anchor=north west][inner sep=0.75pt]  [font=\small] [align=left] {clock};
% Text Node
\draw (115,185) node [anchor=north west][inner sep=0.75pt]  [font=\small] [align=left] {msg};
% Text Node
\draw (227,214) node [anchor=north west][inner sep=0.75pt]  [font=\small] [align=left] {clock};
% Text Node
\draw (279,142) node [anchor=north west][inner sep=0.75pt]  [font=\small] [align=left] {clock};
% Text Node
\draw (217,147) node [anchor=north west][inner sep=0.75pt]  [font=\small] [align=left] {clock};
% Text Node
\draw  [fill={rgb, 255:red, 255; green, 255; blue, 255 }  ,fill opacity=1 ]  (76, 191) circle [x radius= 13.6, y radius= 13.6]   ;
\draw (70,183) node [anchor=north west][inner sep=0.75pt]   [align=left] {A};
% Text Node
\draw  [fill={rgb, 255:red, 255; green, 255; blue, 255 }  ,fill opacity=1 ]  (76.5, 261) circle [x radius= 13.31, y radius= 13.31]   ;
\draw (71,253) node [anchor=north west][inner sep=0.75pt]   [align=left] {T};
% Text Node
\draw (56,142) node [anchor=north west][inner sep=0.75pt]  [font=\small] [align=left] {clock};
% Text Node
\draw (11,183) node [anchor=north west][inner sep=0.75pt]  [font=\small] [align=left] {msg};
% Text Node
\draw (67,293) node [anchor=north west][inner sep=0.75pt]   [align=left] {(a)};
% Text Node
\draw (224,292) node [anchor=north west][inner sep=0.75pt]   [align=left] {(b)};
% Text Node
\draw (451,292) node [anchor=north west][inner sep=0.75pt]   [align=left] {(c)};
% Connection
\draw    (401.73,179.57) .. controls (432.88,158.24) and (464.54,157.73) .. (496.7,178.04) ;
\draw [shift={(499.18,179.65)}, rotate = 213.59] [fill={rgb, 255:red, 0; green, 0; blue, 0 }  ][line width=0.08]  [draw opacity=0] (8.93,-4.29) -- (0,0) -- (8.93,4.29) -- cycle    ;
% Connection
\draw [color={rgb, 255:red, 0; green, 0; blue, 0 }  ,draw opacity=1 ]   (497.4,198.52) .. controls (464.5,210.63) and (434,211.09) .. (405.92,199.92) ;
\draw [shift={(403.33,198.85)}, rotate = 23.06] [fill={rgb, 255:red, 0; green, 0; blue, 0 }  ,fill opacity=1 ][line width=0.08]  [draw opacity=0] (8.93,-4.29) -- (0,0) -- (8.93,4.29) -- cycle    ;
% Connection
\draw    (393.1,203.6) -- (393.38,243.69) ;
\draw [shift={(393.4,246.69)}, rotate = 269.59] [fill={rgb, 255:red, 0; green, 0; blue, 0 }  ][line width=0.08]  [draw opacity=0] (8.93,-4.29) -- (0,0) -- (8.93,4.29) -- cycle    ;
% Connection
\draw    (188.73,182.57) .. controls (219.88,161.24) and (251.54,160.73) .. (283.7,181.04) ;
\draw [shift={(286.18,182.65)}, rotate = 213.59] [fill={rgb, 255:red, 0; green, 0; blue, 0 }  ][line width=0.08]  [draw opacity=0] (8.93,-4.29) -- (0,0) -- (8.93,4.29) -- cycle    ;
% Connection
\draw [color={rgb, 255:red, 0; green, 0; blue, 0 }  ,draw opacity=1 ]   (284.4,201.52) .. controls (251.5,213.63) and (221,214.09) .. (192.92,202.92) ;
\draw [shift={(190.33,201.85)}, rotate = 23.06] [fill={rgb, 255:red, 0; green, 0; blue, 0 }  ,fill opacity=1 ][line width=0.08]  [draw opacity=0] (8.93,-4.29) -- (0,0) -- (8.93,4.29) -- cycle    ;
% Connection
\draw    (180.1,206.6) -- (180.38,246.69) ;
\draw [shift={(180.4,249.69)}, rotate = 269.59] [fill={rgb, 255:red, 0; green, 0; blue, 0 }  ][line width=0.08]  [draw opacity=0] (8.93,-4.29) -- (0,0) -- (8.93,4.29) -- cycle    ;
% Connection
\draw    (76.1,204.6) -- (76.38,244.69) ;
\draw [shift={(76.4,247.69)}, rotate = 269.59] [fill={rgb, 255:red, 0; green, 0; blue, 0 }  ][line width=0.08]  [draw opacity=0] (8.93,-4.29) -- (0,0) -- (8.93,4.29) -- cycle    ;

\end{tikzpicture}}
    \caption{(a) Classical model (b) Sleeping model, (c) Activation model.  }
    \label{fig:1}
\end{figure}

Fig.~\ref{fig:1}(b) displays the automaton corresponding to the behavior of a node in the sleeping variant. In this variant, a node can also be in a \emph{passive} (P) state. In this state, the clock event can either leave the node passive, or awake the node, which then moves back to the active state.  

Finally, Fig.~\ref{fig:1}(c) displays the automaton corresponding to the behavior of a node in our  activation variants. It differs from the sleeping variant in that a passive node is also subject to message events, which can leave the node passive, but may also move the node to the active state.  In particular, a node does not need to be active for receiving messages, and incoming messages may not trigger an immediate response from the node (e.g., forwarding information). Instead, a node can remain passive while collecting information from each of its neighbors, and eventually react by becoming active. 

\paragraph{Example 1: Broadcast.} 

Assume that one node of the $n$-node cycle~$C_n$ has a token to be broadcast to all the nodes. Initially, all nodes are active. However, all nodes but the one with the token become immediately passive when the clock ticks for entering the second round. The node with the token sends the token to one of its neighbors, and becomes passive at the next clock tick. Upon reception of the token, a passive node becomes active, forwards the token, and terminates. When the source node receives the token back, it becomes active, and terminates. The node-activation complexity of broadcast is therefore~$O(1)$, whereas it is known that broadcasting has awake complexity $\Omega(\log n)$ in the sleeping model~\cite{BarenboimM21}. 

\paragraph{Example 2: At-least-one-leader.} 

Assume that each node of the cycle~$C_n$ has an input-bit specifying whether the node is leader or not, and the nodes must collectively check that there is at least one leader. Every leader broadcasts a token, outputs accept, and terminates. Non-leader nodes become passive immediately after the beginning of the algorithm, and start waiting for $N$~rounds (recall that $N$ is an upper bound on the number~$n$ of nodes). Whenever the ``sleep'' of a (passive) non-leader is interrupted by the reception of a token, it becomes active, forwards the token, outputs accept, and terminates. After $N$ rounds, a passive node that has not been ``awaken'' by a token becomes active, outputs reject, and terminates. This guarantees that there is at least one leader if and only if all nodes accept. The node-activation complexity of this algorithm is $O(1)$, while the awake complexity of at-least-one-leader is $\Omega(\log n)$ in the sleeping model, by reduction to broadcast. 

\medskip 

The following observation holds for LOCAL and CONGEST, by noticing that every algorithm for the sleeping model can be implemented with no overheads in terms of node-activation.

\begin{observation}\label{obs:sleep-vs-activation}
In $n$-node graphs, every algorithm with awake complexity~$a(n)$ and round complexity~$r(n)$ has node-activation complexity at most~$a(n)$ and round complexity at most~$r(n)$. 
\end{observation}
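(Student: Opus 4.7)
The proof is essentially a direct simulation: any algorithm $A$ designed for the sleeping model can be executed verbatim in the activation model, reinterpreting ``asleep'' as ``passive'' and ``awake'' as ``active (whether or not the node sends).'' The plan is to argue (i) that this simulation is faithful, so it produces the same outputs in the same number of rounds, and (ii) that the node-activation count is upper bounded by the awake count.

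First I would set up the simulation. Given $A$ with awake complexity $a(n)$ and round complexity $r(n)$, define $A'$ in the activation model as follows: at each round $r$, every node $v$ runs exactly the local transition prescribed by $A$ at that round; if $A$ has $v$ in the awake state, then $v$ sends the messages $A$ prescribes (possibly none) and processes the messages it receives; if $A$ has $v$ in the asleep state, then $v$ sends nothing and \emph{discards} any message that arrives along an incident edge at round $r$. Note that, as observed in the paper, a passive node in the activation model is still allowed to receive messages, so discarding them is a legal (and required) local behavior.

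Next, I would verify correctness. The only formal discrepancy between the sleeping model and this simulation is that in the sleeping model a message sent to a sleeping neighbor is lost, whereas in the activation model it is physically delivered. Since $A'$ instructs passive nodes to ignore incoming messages, the local state evolution and the output of every node are identical in $A$ and $A'$, round by round. In particular $A'$ terminates in the same number of rounds $r(n)$ as $A$.

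Finally, I would bound the activation cost. By the rules of the sleeping model, a node can send a message only while awake; hence in $A'$ a node is active at round $r$ only if it was awake at round $r$ under $A$. Taking the maximum over nodes yields $\NA(A') \leq a(n)$, as claimed. There is no real obstacle here: the argument is a one-line observation that ``sending implies awake,'' wrapped in a sanity check that the simulation of a sleeping algorithm remains well-defined in the activation model where messages to passive nodes are not automatically dropped.
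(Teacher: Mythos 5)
Your proposal is correct and matches the paper's justification, which is exactly the one-line remark that every sleeping-model algorithm can be implemented in the activation model with no overhead (passive $=$ asleep, with passive nodes discarding incoming messages to mimic message loss, and activation only occurring when an awake node sends). Your write-up simply makes explicit the faithfulness of this round-by-round simulation and the bound $\NA(A')\leq a(n)$, which the paper leaves implicit.
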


It follows from Observation~\ref{obs:sleep-vs-activation} that all upper bound results for the awake complexity directly transfer to the node-activation complexity. However, as we shall show in this paper, in contrast to the sleeping model in which some problems (e.g., spanning tree) have awake complexity $\Omega(\log n)$, even in the LOCAL model, all problems admit a frugal algorithm in the CONGEST model, i.e., an algorithm with node-activation~$O(1)$. 

\begin{definition}
A LOCAL or CONGEST algorithm is \emph{node-frugal} (resp., \emph{edge-frugal}) if the activation of every node (resp., edge) is upper-bounded by a constant independent of the graph, and of the identifiers and inputs given to the nodes. 
\end{definition}

%%%%%%%%%%%%%%%%%%%%%%%%%%%%%%%%%%%%%
\section{Universality of Frugal Algorithms}
%%%%%%%%%%%%%%%%%%%%%%%%%%%%%%%%%%%%%

In this section we show that every Turing-computable problem can be solved by frugal algorithms, both in the LOCAL and  CONGEST models. Thanks to Eq.~\ref{eq:ineq-activation}, it is sufficient to prove that this holds for node-frugality.

\begin{lemma} \label{lemm:leader}
There exists a CONGEST algorithm electing a leader, and constructing a BFS tree rooted at the leader, with node-activation complexity~$O(1)$, and performing in $O(N^2) = O(\mbox{\rm poly}(n))$ rounds.
\end{lemma}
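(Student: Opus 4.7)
The plan is to encode the identifier of the leader into time, so that the node with the smallest identifier is the unique one that ever initiates anything, and every other node becomes a mere forwarder that activates at most once. Assign to each node $v$ a \emph{wake-up round} $r_v = (\id(v)-1)\cdot N$. All nodes start active at round $1$, immediately become passive at the tick to round $2$, and thereafter listen. A node $v$ behaves as follows. If, at some round $r<r_v$, $v$ receives (for the first time) one or more BFS tokens, then at round $r+1$ it becomes active once, picks as parent $\pi(v)$ the sender with the smallest $\id$ among those from which a token arrived at round $r$, sends one BFS token to all its neighbors containing the originator's identifier together with $\pi(v)$, and then goes back to passive. If, on the contrary, $v$ reaches round $r_v$ without having received any token, it declares itself leader, sets $\pi(v)=\bot$, and broadcasts a BFS token carrying its own identifier; it then returns to passive. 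In both cases the node performs exactly one sending round.

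The main argument is the timing. Let $s=\min_{v\in V}\id(v)$. Then $s$ has the smallest wake-up round $r_s$, and from round $r_s$ the BFS wave from $s$ propagates one hop per round, reaching every node at distance $k$ by the end of round $r_s+k-1\le r_s+D-1$. Since $D\le n-1\le N-1$, we have $r_s+D-1 < r_s+N = r_{s+1}\le r_v$ for every $v$ with $\id(v)>s$. Hence every other node $v$ receives a token strictly before $r_v$, so $v$ never triggers its own wake-up branch and only forwards. This proves that exactly one BFS wave is initiated, and because parents are picked among the first-arriving senders, the resulting parent pointers form a BFS tree rooted at~$s$.

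For the tree to be usable, each node should also know its children. This comes for free from the fact that every forwarded token carries $\pi(v)$: a parent $p$, although passive after its single transmission at some round $t$, still receives messages, and in particular it receives at round $t+1$ the tokens forwarded by all of its children (they are exactly those distance-$1$ neighbors that saw $p$'s token for the first time at round $t$ and picked $p$ as parent). Recording them requires no further activation. After a globally known horizon, say $N^2+N$ rounds, all nodes terminate.

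The main obstacle is the one already handled above: realising that children can be identified without a second sending round, by piggybacking $\pi(v)$ on every BFS token and exploiting the fact that passive nodes can still receive. Given that, the accounting is immediate: every node sends in exactly one round, so $\NA=O(1)$ and a fortiori $\EA=O(1)$; and the total number of rounds is bounded by $\max_v r_v + D + O(1) \le (N-1)N + N = O(N^2) = O(\mathrm{poly}(n))$, as required.
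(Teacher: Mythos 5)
Your proposal is correct and follows essentially the same approach as the paper: the node with the smallest identifier is the first to reach its identifier-dependent wake-up round without having heard anything, it alone initiates a BFS wave that completes within $N<$ (one wake-up period) rounds, and every other node activates exactly once to forward. The only addition is the explicit piggybacking of parent pointers so that parents learn their children while passive, a detail the paper leaves implicit.
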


\begin{proof}
The algorithm elects as leader the node with smallest identifier, and initiates a breadth-first search from that node. At every node~$v$, the protocol performs as follows. 

\begin{itemize}
\item If $v$ has received no messages until round $\id(v) \cdot N$, then $v$ elects itself as leader, and starts a BFS by sending message $(\id(v),0)$ to all its neighbors. Locally, $v$~sets its parent in the BFS tree to \(\bot\), and the distance to the root to~$0$.

\item Otherwise, let $r$ be the first round at which vertex~$v$  receives a message. Such a message is of type $(\id(u),d)$ where $u$ is the neighbor of $v$ which sent the message to~$v$, and $d$ is the distance from $u$ to the leader in the graph. Node $v$ sets its parent in the BFS tree to $\id(u)$, its distance to the root to $d+1$, and, at round $r+1$, it sends the message $(\id(v),d+1)$ to all its neighbors. (If $v$ receives several messages at round~$r$, from different neighbors, then $v$ selects the messages coming from the neighobors with smallest identifier).
\end{itemize}

The node $v$ with smallest identifier is indeed the node initiating the BFS, as the whole BFS is constructed between
rounds ${\id(v)\cdot N}$ and ${\id(v)\cdot N + N - 1}$, and $N\geq n$. The algorithm terminates at round at most~$O(N^2)$. 
\qed
\end{proof}

An instance of a problem is a triple $(G,\id,x)$ where $G=(V,E)$ is an $n$-node graph, $\id:V\to [1,N]$ with $N=O(\mbox{poly}(n))$, and $x:V\to [1,\nu]$ is the input assignment to the nodes. Note that the input range $\nu$ may depend on~$n$, and even be exponential in~$n$, even for classical problems, e.g., whenever weights assigned to the edges are part of the input. A solution to a graph problem is an output assignment $y:V\to [1,\mu]$, and the correctness of~$y$ depends on $G$ and $x$ only, with respect to the specification of the problem. We assume that $\mu$ and $\nu$ are initially known to the nodes, as it is the case for, e.g., {\sc mst}, in which the weights of the edges can be encoded on $O(\log n)$ bits.

\begin{theorem}\label{theo:all-in-LOCAL}
Every Turing-computable problem
has a LOCAL algorithm with $O(1)$ node-activation complexity, and running in  $O(N^2) = O(\mbox{\rm poly}(n))$ rounds.
\end{theorem}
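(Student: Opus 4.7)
The plan is to reuse Lemma~\ref{lemm:leader}, which already builds a BFS tree with constant node-activation, and then complete the task by a single upward gather followed by a single downward broadcast, exploiting the unbounded message size of the LOCAL model so that each node sends exactly once in each of these two phases.

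First I would run the algorithm of Lemma~\ref{lemm:leader}, which elects the node $v^\star$ of smallest identifier as leader, constructs a BFS tree $T$ rooted at $v^\star$, and gives to every node $v$ its parent pointer and its depth $d(v)\in\{0,\dots,N-1\}$ in $T$, all within $R:=O(N^2)$ rounds and with $O(1)$ node-activations per node. Since $N$ and $R$ are known to every node, the remaining two phases can be globally scheduled.

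In the gather phase, every non-root node $v$ sends, at the single round $R+1+(N-d(v))$, one (arbitrarily long) message to its parent containing the identifier, input, and collected subtree of $v$ in $T$; leaves simply send $(\id(v),x(v))$. A node $v$ at depth $d$ receives the messages from each of its children (which are at depth $d+1$) at round $R+1+(N-d-1)$, hence strictly before it has to transmit, so by induction on decreasing depth the message $v$ sends encodes the whole subtree of $v$ in $T$, together with all identifiers and inputs. Each node is thus activated exactly once during the gather, and after $N$ rounds the root holds the complete description of $(G,\id,x)$ restricted to $T$; since $T$ spans $G$, and edges between nodes of the subtrees are detected because every node lists its identifier and inputs, the root can locally reconstruct $(G,\id,x)$ (augmenting the gathered messages with the identifiers of all neighbors of $v$, which $v$ learned from the BFS broadcast, makes non-tree edges visible as well). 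The leader then applies the Turing machine computing the problem to obtain the output assignment $y$; this is a purely local computation and costs no activation.

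In the broadcast phase, the leader becomes active at round $R+N+2$ and sends the entire vector $y$ to all its children. A node $v$ at depth $d\geq 1$ receives this message passively at round $R+N+1+d$ and, at round $R+N+2+d$, becomes active, forwards the same message to all its children (if any), records $y(v)$ as its output, and terminates. Again each node sends at most once. The total number of activations per node is therefore $O(1)$ from the BFS, plus one from the gather, plus one from the broadcast, i.e.\ $O(1)$; the total number of rounds is $R+2N+O(1)=O(N^2)=O(\mathrm{poly}(n))$.

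The only delicate point, and the main thing to argue carefully in the full proof, is the correctness and feasibility of the depth-indexed gather schedule: one must verify that when a node $v$ transmits at round $R+1+(N-d(v))$ it has already received \emph{all} messages from its children, and that a parent passively receiving several messages from children at different rounds can accumulate them without being counted as active. The first follows from $d(\text{child})=d(v)+1$ together with the monotonicity of the schedule in depth; the second is exactly the distinguishing feature of the activation model recalled in Fig.~\ref{fig:1}(c), where passive nodes may receive messages without switching to the active state. With these two observations in place, the above construction gives a node-frugal LOCAL algorithm for every Turing-computable problem, as claimed.
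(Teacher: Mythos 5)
Your proposal is correct and follows essentially the same route as the paper's proof: build the BFS tree of Lemma~\ref{lemm:leader}, convergecast the whole instance to the root with one activation per node, compute the output locally at the root, and broadcast it back down with one more activation per node. The only difference is presentational---you use an explicit depth-indexed schedule for the gather where the paper lets each internal node wait until it has heard from all its children---and your explicit handling of non-tree edges via neighbor lists is a welcome clarification rather than a departure.
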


\begin{proof}
Once the BFS tree~$T$ of Lemma \ref{lemm:leader} is constructed, the root can (1)~gather the whole instance $(G,\id,x)$, (2)~compute a solution~$y$, and (3)~broadcast~$y$ to all nodes. Specifically, every leaf~$v$ of~$T$ sends the set 
\[
E(v)=\big\{\{(\id(v),x(v)),(\id(w),x(w))\}:w\in N(v)\big\}
\]
to its parent in~$T$. An internal node~$v$ waits for receiving a set of edges $S(u)$ from each of its children~$u$ in~$T$, and then forwards the set 
\[
S(v)=E(v)\cup(\cup_{u\in \mbox{\tiny child}(v)}S(u))
\]
to its parent. This set can be encoded in \(O(N^2)\) bits by the adjacency matrix of the subgraph induced by the edges in \(S(v)\). Each node of $T$ is activated once during this phase, and thus the node-activation complexity of gathering is~1. Broadcasting the solution~$y$ from the leader to all the nodes is achieved along the edges of~$T$, again with node-activation~1.
\qed
\end{proof}

The algorithm used in the proof of Theorem~\ref{theo:all-in-LOCAL} cannot be implemented in CONGEST due to the size of the messages, which may require each node to be activated more than a constant number of times. To keep the node-activation constant, we increased the round complexity of the algorithm. 

\begin{lemma}\label{lem:simulation-local-congest}
Every node-frugal algorithm $\mathcal{A}$ performing in $R$ rounds in the LOCAL model with messages of size at most~$M$ bits\footnote{Without loss of generality, in \(\mathcal{A}\) each node sends the same message to all its neighbors at each round when it is active. Otherwise, the different messages can be merged into one, by adding the identifiers of the neighbors.} can be implemented by a node-frugal algorithm $\mathcal{B}$ performing in $R\,2^M$ rounds in the CONGEST model.
\end{lemma}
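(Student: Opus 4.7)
The plan is to simulate each round of $\mathcal{A}$ by a block of $2^M$ consecutive CONGEST rounds, encoding the $M$-bit broadcast message $m_v$ of an active node $v$ as the \emph{time slot} within the block at which $v$ transmits a single, constant-size signal. Since Section~\ref{sec:preliminaries} guarantees that a passive node can still receive messages, each neighbor of $v$ can observe the arrival time of the signal and decode the underlying $M$-bit integer without itself becoming active, so the number of activations of each node is preserved exactly.

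Concretely, using the footnote's reduction I first assume that in $\mathcal{A}$ every active node broadcasts the same $M$-bit message to all its neighbors. I then partition the CONGEST timeline into $R$ consecutive blocks $B_1,\ldots,B_R$ of length $2^M$, with $B_r = \{(r-1)2^M+1,\ldots,r\,2^M\}$; since all nodes share a synchronized clock and know both $M$ and $R$, this partition is globally consistent. During $B_r$, every node $v$ that would be active at round $r$ of $\mathcal{A}$ computes its message $m_v\in\{0,1\}^M$, interprets it as an integer $k_v\in\{0,\ldots,2^M-1\}$, and becomes active exactly at round $(r-1)2^M+k_v+1$ of $\mathcal{B}$ to send a fixed constant-size signal to all its neighbors. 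Every other node remains passive throughout $B_r$. Each node $v$, regardless of its activation status, records the positions within $B_r$ at which it received signals from each neighbor $u$, and decodes each such position $k_u$ as the $M$-bit representation of the message sent by $u$ at round $r$ of $\mathcal{A}$. At the end of $B_r$, node $v$ performs internally (without any activation) the local state update that $\mathcal{A}$ prescribes for round $r$, and proceeds to~$B_{r+1}$.

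Correctness follows by a straightforward induction on $r$: if the simulated states at the beginning of $B_r$ agree with the states of $\mathcal{A}$ at the start of round $r$, then the timing-encoded exchanges during $B_r$ deliver precisely the messages of round $r$ of $\mathcal{A}$, so the states agree again at the start of $B_{r+1}$. For the activation bound, each node is active in $B_r$ at most once, and only when it was active at round $r$ of $\mathcal{A}$, so $\NA(\mathcal{B})=\NA(\mathcal{A})$ and $\mathcal{B}$ inherits the node-frugality of $\mathcal{A}$; the round complexity is $R\cdot 2^M$ by construction. The only delicate points are ensuring that all nodes maintain a coherent view of the block boundaries and that the signal is recognizable as a timing marker rather than algorithmic content; both are immediate from the global clock synchronization and the common knowledge of $M$, so there is no substantial obstacle beyond setting up the time-multiplexing correctly.
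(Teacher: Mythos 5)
Your proposal is correct and follows essentially the same route as the paper's proof: both encode each $M$-bit message of $\mathcal{A}$ as a time slot within a block of $2^M$ CONGEST rounds, with the active node emitting a single constant-size signal at the slot determined by its message, and passive neighbors decoding from the arrival time. Your write-up is just a more detailed elaboration of the paper's one-line argument.
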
 

\begin{proof}
Let $v$ be a node sending a message~$m$ through an incident edge~$e$ at round~$r$ of~$\mathcal{A}$. Then, in~$\mathcal{B}$, $v$ sends one ``beep'' through edge~$e$ at round $r\,2^M+t$ where $t$ is lexicographic rank of $m$ among the at most $2^M$ messages generated by~$\mathcal{A}$. 
\qed
\end{proof}

\begin{theorem}\label{theo:all-in-CONGEST}
Every Turing-computable problem
has a CONGEST algorithm with $O(1)$ node-activation complexity, and 
 running in 
$2^{\mbox{\rm\scriptsize poly}(n)(1+\log(\nu\mu))}$ rounds for inputs in the range $[1,\nu]$ and outputs in the range $[1,\mu]$. 
\end{theorem}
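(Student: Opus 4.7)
The plan is simply to combine Theorem~\ref{theo:all-in-LOCAL} with the LOCAL-to-CONGEST simulation of Lemma~\ref{lem:simulation-local-congest}, being careful to bound the maximum message size produced by the LOCAL algorithm.

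First I would take the algorithm $\mathcal{A}$ constructed in the proof of Theorem~\ref{theo:all-in-LOCAL}: elect a leader and build a BFS tree $T$ using Lemma~\ref{lemm:leader}, then gather the whole instance $(G,\id,x)$ up along $T$ by letting every internal node forward $S(v)=E(v)\cup(\bigcup_{u\in\mathrm{child}(v)}S(u))$ to its parent, let the leader compute a solution $y$, and finally broadcast $y$ down $T$. This algorithm has node-activation $O(1)$, runs in $R=O(N^2)$ rounds, and its messages carry at most (i)~the adjacency matrix of a subgraph on at most $N$ vertices, (ii)~the identifier and input of each vertex appearing in that subgraph, and (iii)~during the broadcast, the output assignment. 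Thus each message fits in $M=O(N^2+N\log N+N\log\nu+N\log\mu)=\mathrm{poly}(n)\cdot(1+\log(\nu\mu))$ bits. The small caveat, handled by the footnote of Lemma~\ref{lem:simulation-local-congest}, is that a node may send different messages to different neighbors (e.g., only to its parent in the gather phase): I would merge these into a single tagged message by prefixing each chunk with the destination identifier, which only inflates $M$ by a $\log N$ factor and keeps $M$ in the same range.

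Next, I would invoke Lemma~\ref{lem:simulation-local-congest} on $\mathcal{A}$: it yields a CONGEST algorithm $\mathcal{B}$ in which every round of $\mathcal{A}$ is simulated by a block of $2^M$ consecutive CONGEST rounds, using one ``beep'' sent at the time-slot whose lexicographic rank equals the message transmitted in $\mathcal{A}$. The simulation preserves the set of rounds in which each node is active, so $\mathcal{B}$ inherits node-activation $O(1)$ from~$\mathcal{A}$. Its round complexity is $R\cdot 2^M=O(N^2)\cdot 2^{\mathrm{poly}(n)(1+\log(\nu\mu))}=2^{\mathrm{poly}(n)(1+\log(\nu\mu))}$, as claimed.

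The only mildly delicate point, and what I would expect to be the main obstacle if one tried to push the bound down, is the bookkeeping of the true message size of the gather phase: the adjacency matrix of the subgraph accumulated at an internal node can be as large as $\Theta(N^2)$ bits in the worst case, which is what makes the blow-up factor $2^M$ doubly exponential in $n$. For the present theorem, however, this loose estimate is exactly what the statement accommodates, and no finer analysis is needed. \qed
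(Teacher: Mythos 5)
Your proposal is correct and follows exactly the paper's own route: bound the message size of the LOCAL algorithm from Theorem~\ref{theo:all-in-LOCAL} by $\mathrm{poly}(n)(1+\log(\nu\mu))$ bits (the paper states $N^2+N\log\nu$ for gathering and $N\log\mu$ for broadcast) and then apply Lemma~\ref{lem:simulation-local-congest}. Your extra remark about merging per-neighbor messages is precisely what the footnote of that lemma already covers, so nothing further is needed.
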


\begin{proof}
The algorithm used in the proof of Theorem~\ref{theo:all-in-LOCAL} used messages of size at most $N^2+N\log \nu$ bits during the gathering phase, and of size at most $N\log \mu$ bits during the broadcast phase. The result follows from Lemma~\ref{lem:simulation-local-congest}.\qed
\end{proof}

Of course, there are many problems that can be solved in the CONGEST model by a frugal algorithm much faster than the bound from Theorem~\ref{theo:all-in-CONGEST}. This is typically the case of all problems that can be solved by a sequential greedy algorithm visiting the nodes in arbitrary order, and producing a solution at the currently visited node based only on the partial solution in the neighborhood of the node. Examples of such problems are {\sc mis}, $\Delta + 1$-coloring, etc. We call such problem \emph{sequential-greedy}. 

\begin{theorem}
Every sequential-greedy problem whose solution at every node can be encoded on $O(\log n)$ bits has a node-frugal CONGEST algorithm running in $O(N) = O({\mbox{\rm poly}(n)})$ rounds.
\end{theorem}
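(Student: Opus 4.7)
The plan is to implement the sequential-greedy paradigm in a distributed manner, with each node ``scheduling itself'' according to its identifier. Every node starts in the passive state. For each node $v$, at round $\id(v)$, $v$ becomes active, computes its solution $y(v)$ from the messages received from its neighbors while it was passive, transmits $y(v)$ to all its neighbors in the same round, and then enters the terminated state.

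Correctness reduces to verifying that by round $\id(v)$, the node $v$ has collected $y(u)$ for every neighbor $u$ with $\id(u) < \id(v)$. This is where the distinction between the activation model and the sleeping model (cf. Section~\ref{sec:preliminaries}) becomes essential: a passive node can still receive and buffer messages. Each such $u$ becomes active at its own round $\id(u) < \id(v)$ and broadcasts $y(u)$ on the edge $\{u,v\}$, which $v$ stores while remaining passive. Thus, when $v$ wakes up, it has full access to the ``partial solution'' $\{y(u): u \in N(v), \id(u) < \id(v)\}$, which by the sequential-greedy property suffices to compute a locally valid $y(v)$. Applying the property to the total order on $V$ induced by the identifiers guarantees that the resulting global output is a valid solution.

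For the complexity: each node is active during a single round, so the node-activation complexity is~$1$ (and the edge-activation is at most~$2$). Every transmitted message $y(v)$ fits in $O(\log n)$ bits by assumption, which respects the CONGEST bandwidth. The algorithm terminates by round $\max_v \id(v) \leq N = O(\mbox{poly}(n))$.

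The main obstacle, though a mild one, is making sure no additional activation is needed when several neighbors transmit to a passive $v$ at different rounds: this follows directly from the definition of activation, since only the act of \emph{sending} contributes to $\NA$, so buffering incoming messages while passive is free. A second point to double-check is that the sequential-greedy definition indeed allows the computation of $y(v)$ from just the solutions of its \emph{already-visited} neighbors (rather than requiring knowledge of all of $N(v)$); this matches precisely the phrasing in the paragraph preceding the theorem.
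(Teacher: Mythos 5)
Your proof is correct and follows exactly the paper's approach: each node activates once, at round $\id(v)$, computes its output from the buffered messages of its lower-identifier neighbors, and broadcasts it. The paper states this in a single sentence; your additional remarks on buffering while passive and on the greedy property only needing the already-visited neighbors are accurate elaborations of the same argument.
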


\begin{proof}
Every node $v \in V$ generates its output  at round~$\id(v)$ according to its current knowledge about its neighborhood, and sends this output to all its neighbors. 
\qed
\end{proof}

%%%%%%%%%%%%%%%%%%%%%%%%%%%%%%%%%%%%%
\section{Limits of CONGEST Algorithms with Polynomially Many Rounds}
%%%%%%%%%%%%%%%%%%%%%%%%%%%%%%%%%%%%%

Given a graph \(G = (V,E)\) such that \(V\) is partitioned in two sets \(V_A, V_B\), the set of edges with one endpoint in \(V_A\) and the other in \(V_B\) is called the \emph{cut}. We denote by \(e(V_A,V_B)\) the number of edges in the cut, and by \(n(V_A,V_B)\) the number of nodes incident to an edge of the cut.  Consider the situation where there are two players, namely Alice and Bob. We say that a player controls a node \(v\) if it knows all its incident edges and its input. For a CONGEST algorithm \(\mathcal{A}\), we denote \(\mathcal{A}(\mathcal{I})\) the output of \(\mathcal{A}\) on input \(\mathcal{I} = (G,\id, x)\). We denote \(R_\mathcal{A}(n)\) the round complexity of \(\mathcal{A}\) on inputs of size \(n\).

\begin{lemma}[Simulation lemma]\label{lem:simu}
    Let  \(\mathcal{A}\) be an algorithm in the CONGEST model, let \(\mathcal{I} = (G,\id,x)\) be an input for $\mathcal{A}$, and let \(V_A, V_B\) be a partition of \(V(G)\).  Suppose that Alice controls all the nodes in \(V_A\), and Bob controls all the nodes in \(V_B\).  Then, there exists a communication protocol \(\mathcal{P}\) between Alice and Bob with at most \(2\cdot\min(n(V_A,V_B)\cdot \NA(\mathcal{A}), e(V_A,V_B)\cdot \EA(\mathcal{A}))\) rounds and using total communication \(\mathcal{O}(\min(n(V_A,V_B)\cdot \NA(\mathcal{A}), e(V_A,V_B)\cdot \EA(\mathcal{A}))\cdot (\log n +\log R_{\mathcal{A}}(n))\), such that each player computes the value of \(\mathcal{A}(\mathcal{I})\) at all nodes he or she controls. 
\end{lemma}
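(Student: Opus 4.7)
The plan is for Alice and Bob to each simulate $\mathcal{A}$ locally on the nodes they control, using the 2-party channel exclusively to exchange information about messages that would travel across the cut during the execution. Since Alice knows the topology, identifiers and inputs of all of $V_A$, and symmetrically for Bob, the only data each player lacks is the sequence of cross-cut messages incoming from the other side; hence communication is needed only when a cross-cut message is sent.

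Concretely, both players maintain a synchronized simulated clock $t$, initially $0$, together with the entire state of their side of the execution up through round $t$ (consistent with every cross-cut message exchanged so far). Each iteration proceeds as follows. Each player computes the earliest round $r_A$ (resp.\ $r_B$) strictly greater than $t$ at which one of the cut-incident nodes they control is scheduled to be active and send across the cut, \emph{assuming no further cross-cut messages arrive from the other side}, setting this value to $+\infty$ if no such round exists. The players exchange $(r_A,r_B)$ in one 2-party round, and whichever attains $\min(r_A,r_B)$ transmits in the next 2-party round the identifier of the active node together with the $O(\log n)$-bit CONGEST messages destined to its cut-neighbors on the opposite side (ties broken by both playing). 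The recipient integrates these messages into its simulation state, both set $t := \min(r_A,r_B)$, and iterate. The protocol halts once $r_A = r_B = +\infty$, at which point each player has locally simulated $\mathcal{A}$ through termination on the nodes it controls, and can thus output the restriction of $\mathcal{A}(\mathcal{I})$ to those nodes.

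To bound the number of 2-party rounds, note that each iteration advances $t$ past at least one \emph{cross-cut activation}, that is, a round at which some cut-incident node sends on a cut edge. Counting per node, each cut-incident node is active in at most $\NA(\mathcal{A})$ rounds, giving at most $n(V_A,V_B)\cdot \NA(\mathcal{A})$ cross-cut activations in total. Counting per edge, each cut edge is active in at most $\EA(\mathcal{A})$ rounds, and one can bundle the (at most two) messages traversing the same edge in the same round into a single iteration, yielding at most $e(V_A,V_B)\cdot \EA(\mathcal{A})$ iterations. Either way the number of iterations is at most $\min\bigl(n(V_A,V_B)\cdot \NA(\mathcal{A}),\,e(V_A,V_B)\cdot \EA(\mathcal{A})\bigr)$, each uses at most two 2-party rounds and transmits $O(\log n + \log R_{\mathcal{A}}(n))$ bits (one round number, one sender identifier, and the CONGEST message(s)), matching the claimed bounds.

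The main subtle point I expect to be the obstacle is that the value $r_A$ first computed by Alice is only \emph{provisional}: its validity relies on Bob not sending any cross-cut message before round $r_A$, which is precisely what the $(r_A,r_B)$ exchange verifies. When $r_B < r_A$ instead, the cross-cut messages Bob sends at round $r_B$ may activate a $V_A$-node earlier than the old $r_A$; in the next iteration Alice simply recomputes a new $r_A > r_B$ given the additional input. Because the full execution of $\mathcal{A}$ is determined by the cross-cut messages already exchanged and $t$ strictly increases at each iteration, progress and termination are guaranteed, and the counting argument above remains intact.
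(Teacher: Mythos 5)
Your proposal is correct and follows essentially the same route as the paper: both players run an oblivious simulation of $\mathcal{A}$ on their own nodes up to the next provisional cross-cut transmission, exchange those candidate rounds, commit to the earlier one, and repeat, with the same progress argument (each phase advances past at least one activation of a cut-incident node, or of a cut edge) yielding the $\min\bigl(n(V_A,V_B)\cdot \NA(\mathcal{A}),\, e(V_A,V_B)\cdot \EA(\mathcal{A})\bigr)$ bound on the number of phases. The only cosmetic differences are that the paper piggybacks the provisional cross-cut messages onto the same transmission that announces $r_a$, so each phase costs exactly two alternating rounds rather than the three your literal ``exchange $(r_A,r_B)$, then let the winner speak'' description would require, and it spells out your take-the-minimum rule as six explicit cases.
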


\begin{proof}

In protocol \(\mathcal{P}\), Alice and Bob simulate the rounds of algorithm \(\mathcal{A}\)  in all the nodes they control. The simulation run in phases. Each phase is used to simulate up to a certain number of rounds \(t\) of algorithm \(\mathcal{A}\), and takes two rounds of protocol \(\mathcal{P}\) (one round for Alice, and one round for Bob). By simulating \(\mathcal{A}\) up to \(t\) rounds, we mean that Alice and Bob know all the states of all the nodes they control, on every round up to round \(t\). 

In the first phase, players start simulating \(\mathcal{A}\) from the initial state. Let us suppose that both Alice and Bob have already executed \(p\geq 0\) phases, meaning that they had correctly simulated \(\mathcal{A}\) up to round \(t = t(p)\geq 0\). Let us explain phase \(p+1\) (see also Figure~\ref{fig:simulationphase}).

\begin{figure}[!h]
    \centering
    \tikzset{every picture/.style={line width=0.75pt}} %set default line width to 0.75pt        

\begin{tikzpicture}[x=0.75pt,y=0.75pt,yscale=-1,xscale=1]
%uncomment if require: \path (0,611); %set diagram left start at 0, and has height of 611

%Straight Lines [id:da12641619408195537] 
\draw [color={rgb, 255:red, 155; green, 155; blue, 155 }  ,draw opacity=1 ][fill={rgb, 255:red, 155; green, 155; blue, 155 }  ,fill opacity=1 ] [dash pattern={on 0.84pt off 2.51pt}]  (165,191) -- (424.72,191) -- (590,191) ;
%Straight Lines [id:da08091125948686706] 
\draw [color={rgb, 255:red, 155; green, 155; blue, 155 }  ,draw opacity=1 ][fill={rgb, 255:red, 155; green, 155; blue, 155 }  ,fill opacity=1 ] [dash pattern={on 0.84pt off 2.51pt}]  (165,160) -- (424.72,160) -- (590,160) ;
%Straight Lines [id:da7216994958512142] 
\draw    (170,141) -- (170,227) ;
\draw [shift={(170,230)}, rotate = 270] [fill={rgb, 255:red, 0; green, 0; blue, 0 }  ][line width=0.08]  [draw opacity=0] (8.93,-4.29) -- (0,0) -- (8.93,4.29) -- cycle    ;
%Straight Lines [id:da6921992859696626] 
\draw    (213,160) -- (260,160) ;
\draw [shift={(210,160)}, rotate = 0] [fill={rgb, 255:red, 0; green, 0; blue, 0 }  ][line width=0.08]  [draw opacity=0] (5.36,-2.57) -- (0,0) -- (5.36,2.57) -- cycle    ;
%Straight Lines [id:da08378490644858672] 
\draw    (213,170) -- (260,170) ;
\draw [shift={(210,170)}, rotate = 0] [fill={rgb, 255:red, 0; green, 0; blue, 0 }  ][line width=0.08]  [draw opacity=0] (5.36,-2.57) -- (0,0) -- (5.36,2.57) -- cycle    ;
%Straight Lines [id:da29569704575421185] 
\draw    (210,200) -- (257,200) ;
\draw [shift={(260,200)}, rotate = 180] [fill={rgb, 255:red, 0; green, 0; blue, 0 }  ][line width=0.08]  [draw opacity=0] (5.36,-2.57) -- (0,0) -- (5.36,2.57) -- cycle    ;
%Straight Lines [id:da9277048669312731] 
\draw    (165,191) -- (175,191) ;
%Straight Lines [id:da059224140921278035] 
\draw [color={rgb, 255:red, 0; green, 0; blue, 0 }  ,draw opacity=1 ]   (360,191) -- (407,191) ;
\draw [shift={(410,191)}, rotate = 180] [fill={rgb, 255:red, 0; green, 0; blue, 0 }  ,fill opacity=1 ][line width=0.08]  [draw opacity=0] (5.36,-2.57) -- (0,0) -- (5.36,2.57) -- cycle    ;
%Straight Lines [id:da30521617601771456] 
\draw    (310,80) -- (310,230) ;
%Straight Lines [id:da8628444994120033] 
\draw [color={rgb, 255:red, 0; green, 0; blue, 0 }  ,draw opacity=1 ]   (503,160) -- (550,160) ;
\draw [shift={(500,160)}, rotate = 0] [fill={rgb, 255:red, 0; green, 0; blue, 0 }  ,fill opacity=1 ][line width=0.08]  [draw opacity=0] (5.36,-2.57) -- (0,0) -- (5.36,2.57) -- cycle    ;
%Straight Lines [id:da9514039069744804] 
\draw    (460,80) -- (460,240) ;
%Straight Lines [id:da25212905580478817] 
\draw    (165,160) -- (175,160) ;
%Straight Lines [id:da8291852354573286] 
\draw    (165,141) -- (175,141) ;
%Straight Lines [id:da5486642407081177] 
\draw    (210,150) -- (210,220) ;
%Straight Lines [id:da756505226737048] 
\draw    (260,150) -- (260,220) ;
%Straight Lines [id:da9892805888181383] 
\draw    (360,150) -- (360,220) ;
%Straight Lines [id:da3933164827307095] 
\draw    (410,150) -- (410,220) ;
%Straight Lines [id:da2897135003587288] 
\draw    (500,150) -- (500,220) ;
%Straight Lines [id:da6980020754095648] 
\draw    (550,150) -- (550,220) ;

% Text Node
\draw (167,240.5) node  [font=\small] [align=left] {\begin{minipage}[lt]{27.54pt}\setlength\topsep{0pt}
rounds
\end{minipage}};
% Text Node
\draw (195,131.4) node [anchor=north west][inner sep=0.75pt]    {$V_{A}$};
% Text Node
\draw (261,131.4) node [anchor=north west][inner sep=0.75pt]    {$V_{B}$};
% Text Node
\draw (145,181.4) node [anchor=north west][inner sep=0.75pt]    {$r_{a}$};
% Text Node
\draw (144,152.4) node [anchor=north west][inner sep=0.75pt]    {$r_{b}$};
% Text Node
\draw (341,131.4) node [anchor=north west][inner sep=0.75pt]    {$V_{A}$};
% Text Node
\draw (411,131.4) node [anchor=north west][inner sep=0.75pt]    {$V_{B}$};
% Text Node
\draw (481,131.4) node [anchor=north west][inner sep=0.75pt]    {$V_{A}$};
% Text Node
\draw (551,131.4) node [anchor=north west][inner sep=0.75pt]    {$V_{B}$};
% Text Node
\draw (324,82) node [anchor=north west][inner sep=0.75pt]  [font=\small] [align=left] {\begin{minipage}[lt]{86.85pt}\setlength\topsep{0pt}
\begin{center}
Oblivious simulation\\of Alice
\end{center}

\end{minipage}};
% Text Node
\draw (146,125.4) node [anchor=north west][inner sep=0.75pt]    {$t$};
% Text Node
\draw (471,82) node [anchor=north west][inner sep=0.75pt]  [font=\small] [align=left] {\begin{minipage}[lt]{86.85pt}\setlength\topsep{0pt}
\begin{center}
Oblivious simulation\\of Bob
\end{center}

\end{minipage}};
% Text Node
\draw (191,83) node [anchor=north west][inner sep=0.75pt]  [font=\small] [align=left] {\begin{minipage}[lt]{54.95pt}\setlength\topsep{0pt}
\begin{center}
Transcript of\\algorithm $\displaystyle \mathcal{A}$
\end{center}

\end{minipage}};

\end{tikzpicture}
    \caption{Illustration of one phase of the simulation protocol. Assuming that the players agree on the simulation of  algorithm \(\mathcal{A}\) up to round \(t\), each player runs an oblivious simulation at the nodes they control. In the example of the figure, the next message corresponds to a node controlled by Bob, who sends a message to a node in \(V_A\) at round \(r_b\). The oblivious simulation of Alice is not aware of this message, and incorrectly considers that a message is sent from \(V_A\) to \(V_B\) at round \(r_a > r_b\). Using the communication rounds in this phase, the players agree that the message of Bob was correct. Thus  the simulation is correct up to round \(r_b\), for both players.   }
    \label{fig:simulationphase}
\end{figure}
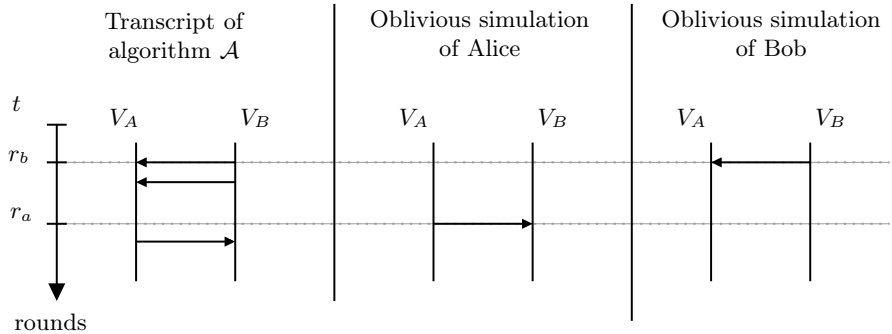

Starting from round \(t\), Alice runs an \emph{oblivious simulation} of algorithm \(\mathcal{A}\) over all nodes that she controls. By oblivious, we mean that Alice assumes that no node of \(V_B\) communicates a message to a node in \(V_A\) in any round at least \(t\). The oblivious simulation of Alice stops in one of the following two possible scenarios:

\begin{itemize}
\item[(1)] All nodes that she controls either terminate or enter into a passive state that quits only on an incoming message from \(V_B\). 
\item[(2)] The simulation reaches a round \(r_a\) where a message is sent from a node in \(V_A\) to a node in \(V_B\).
\end{itemize}

At the same time, Bob runs and oblivious simulation of \(\mathcal{A}\) starting from round \(t\) (i.e. assuming that no node of \(V_A\) sends a message to a node in \(V_B\) in any round at least \(t\)). The oblivious simulation of Bob stops in one of the same two  scenarios analogous to the ones  above. In this case, we call \(r_b\) the round reached by Bob in his version of scenario~(2). 

At the beginning of a phase, it is the turn of Alice to speak.  Once the oblivious simulation of Alice stops, she is ready to send a message to Bob. If the simulation stops in the  scenario (1), Alice sends a message "\emph{scenario 1}" to Bob. Otherwise, Alice sends \(r_a\) together with all the messages sent from nodes in \(V_A\) to nodes in \(V_B\) at round \(r_a\), to Bob. When Bob receives the message from Alice, one of the following situations holds:\\

\noindent{Case 1:} the oblivious simulation of both Alice and Bob stopped in the first scenario. In this case, since \(\mathcal{A}\) is correct, there are no deadlocks. Therefore, all vertices of \(G\) reached a terminal state, meaning that the oblivious simulation of both players was in fact a real simulation of \(\mathcal{A}\), and the obtained states are the output states. Therefore, Bob sends a message to Alice indicating that the simulation is finished, and indeed Alice and Bob have correctly computed the output of \(\mathcal{A}\) for all the nodes they control. \\

\noindent{Case 2:} the oblivious simulation of Alice stopped in scenario (1), and the one of Bob stopped in the scenario (2). In this case, Bob infers that his oblivious simulation was correct. He sends  \(r_b\) and all the messages communicated in round \(r_b\) through the cut to Alice. When Alice receives the message of Bob, she updates the state of the nodes she controls up to round \(r_b\). It follows that both players have correctly simulated algorithm \(\mathcal{A}\) up to round \(r_b > t\).\\ 

\noindent{Case 3:} the oblivious simulation of Alice stopped in scenario (2), and the one of Bob stopped in scenario (1). In this case, Bob infres that the simulation of Alice was correct up to round \(r_a\). He sends a message to Alice indicating that she has correctly simulated \(\mathcal{A}\) up to round \(r_a\), and he updates the states of all the nodes he controls up to round \(r_a\). It follows that both players have correctly simulated  \(\mathcal{A}\) up to round \(r_a > t\). \\

\noindent{Case 4:} the oblivious simulation of both players stopped in scenario (2), and \(r_a > r_b\). Bob infers that his oblivious simulation was correct up to \(r_b\), and that the one of Alice was not correct after round \(r_b\). Then, the players act in the same way as   described in Case 2. Thus, both players have correctly simulated \(\mathcal{A}\) up to round~\(r_b\).\\

\noindent{Case 5:} the oblivious simulation of both players stopped in scenario (2), and \(r_b > r_a\). Bob infers that his oblivious simulation was incorrect after round \(r_a\), and that the one of Alice was correct up to round \(r_a\). Then, the players act in the same way as described in Case 3. Thus, both players have correctly simulated \(\mathcal{A}\) up to round~\(r_a\). \\

\noindent{Case 6:} the oblivious simulation of both players stopped in scenario (2), and \(r_b = r_a\). Bob assumes that both oblivious simulations were correct. He sends \(r_b\) together with all the messages communicated from his nodes at round \(r_b\) through the cut. Then, he updates the states of all the nodes he controls up to round \(r_b\). When Alice receives the message from Bob, she updates the states of the nodes she controls up to round \(r_b\).
It follows that both players have correctly simulated  \(\mathcal{A}\) up to round \(r_b > t\). \\

Observe that, except when the algorithm terminates, on each phase of the protocol, at least one node controlled by Alice or Bob is activated. Since the number of rounds of \(\mathcal{P}\) is twice the number of phases, we deduce that the total number of rounds is at most
\[ 2\cdot \min(n(V_A,V_B)\cdot\NA(\mathcal{A}), e(V_A,V_B)\cdot \EA(\mathcal{A})).  \]
Moreover, on each round of $\mathcal{P}$,  the players communicate \(O((\log(R_{\mathcal{A}}(n)) + \log n) \cdot e(V_A,V_B))\) bits. As a consequence, the total communication cost of $\mathcal{P}$ is \[O((\log(R_{\mathcal{A}}(n)) + \log n )\cdot e(V_A,V_B)) \cdot \min(n(V_A,V_B)\cdot \NA(\mathcal{A}), e(V_A,V_B)\cdot \EA(\mathcal{A})) ),\]
which completes the proof. \qed \end{proof}

We use the simulation lemma to show that there are problems that cannot be solved by a frugal algorithm in a polynomial number of rounds. In problem \CFF{}, all nodes of the input graph $G$ must accept if $G$ has no cycle of 4 vertices, and at least one node must reject if such a cycle exists. Observe that this problem is expressible in first-order logic, in particular it has en edge-frugal algorithm with a polynomial number of rounds in graphs of bounded degree~\cite{GrumbachW09}. We show that, in graphs of unbounded degree, this does not hold anymore.
We shall also consider problem \SYM{}, where the input is a graph $G$ with $2n$ nodes indexed from $1$ to $2n$, and with a unique edge  $\{1,n+1\}$ between $G_A = G[\{1,\dots,n\}]$ and $G_B = G[\{n+1,\dots,2n\}]$. Our lower bounds holds even if  every node is identified by its index. All nodes must output \emph{accept} if the function $f:\{1,\dots,n\} \to \{n+1,\dots,2n\}$ defined by $f(x)=x+n$ is an isomorphism from $G_A$ to $G_B$, otherwise at least one node must output \emph{reject}.

 The proof of the following theorem is based on classic reductions from communication complexity problems \textsc{Equality} and \textsc{Set Disjointness} (see, e.g., \cite{KushilevitzNisan}), combined with Lemma~\ref{lem:simu}. %It can be found in Appendix~\ref{app:th:limits}. 

\begin{theorem}\label{th:limits}
Any CONGEST algorithm solving \SYM{} (resp., \CFF{}) in polynomially many rounds has node-activation and edge-activation at least $\Omega\left(\frac{n^2}{\log n}\right)$ (resp., $\Omega\left(\frac{\sqrt{n}}{\log n}\right)$).
\end{theorem}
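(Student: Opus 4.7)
The plan is to apply the Simulation Lemma (Lemma~\ref{lem:simu}) twice, each time partitioning $V(G)$ into two sets $V_A, V_B$ controlled by Alice and Bob, embedding a hard 2-party communication problem in their respective inputs, and converting the resulting communication lower bound into an activation lower bound. Since we restrict to algorithms with $R_{\mathcal{A}}(n)=O(\mbox{poly}(n))$, the factor $\log n + \log R_{\mathcal{A}}(n)$ appearing in Lemma~\ref{lem:simu} is $O(\log n)$, so the simulated protocol uses at most $O(\min(n(V_A,V_B)\cdot\NA(\mathcal{A}),\, e(V_A,V_B)\cdot\EA(\mathcal{A}))\cdot\log n)$ bits. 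A communication lower bound of $\Omega(C)$ then yields $\NA(\mathcal{A})=\Omega(C/(n(V_A,V_B)\log n))$ and $\EA(\mathcal{A})=\Omega(C/(e(V_A,V_B)\log n))$; the game in each case is to maximize $C$ while keeping the cut tiny.

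For \SYM{}, the partition is immediate: Alice gets $\{1,\dots,n\}$ and Bob gets $\{n+1,\dots,2n\}$, so $e(V_A,V_B)=1$ (the unique edge $\{1,n+1\}$) and $n(V_A,V_B)=2$. Each player's private input is the edge set of his or her induced subgraph, a bit-string of length $\binom{n}{2}$. By the definition of \SYM{}, the algorithm accepts iff the two strings coincide under the fixed relabeling $f(x)=x+n$, so a correct algorithm yields a protocol for \textsc{Equality} on $\binom{n}{2}$-bit inputs, which requires $\Omega(n^2)$ bits of deterministic communication~\cite{KushilevitzNisan}. Substituting into the formulas above gives $\NA(\mathcal{A}),\EA(\mathcal{A})=\Omega(n^2/\log n)$.

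For \CFF{} the main obstacle is that the reduction from \textsc{Set Disjointness} must avoid ``spurious'' $C_4$s that live entirely on one player's side, or that arise from cut edges alone; otherwise the presence of a $C_4$ would not faithfully witness a common element of the two inputs. The plan is to use a matching cut of size $t$ between two sets of cut vertices $V_A=\{u_1,\dots,u_t\}$ and $V_B=\{v_1,\dots,v_t\}$ joined by the edges $\{u_i,v_i\}$, and to control each side's internal edges through a fixed $C_4$-free host graph $H$ on $[t]$ with $|E(H)|=\Theta(t^{3/2})$ (such $H$ meets the extremal bound for $C_4$-free graphs, e.g.\ the incidence graph of a finite projective plane). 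Alice's input $X\subseteq E(H)$ is encoded by placing the edge $\{u_i,u_j\}$ iff $\{i,j\}\in X$, and Bob encodes $Y\subseteq E(H)$ symmetrically on the $v_i$'s; since $X,Y\subseteq E(H)$, both induced subgraphs inherit the $C_4$-freeness of $H$.

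A short case analysis on the number of cut edges used by a hypothetical $C_4$ in $G$ then completes the reduction: $0$ cut edges is ruled out by $C_4$-freeness of each side; $4$ cut edges is impossible because the cut is a matching (no $u_i$ has two neighbors in $V_B$); and exactly $2$ cut edges $\{u_i,v_i\},\{u_j,v_j\}$ force the cycle $u_i-v_i-v_j-u_j-u_i$, which exists in $G$ iff $\{i,j\}\in X\cap Y$. Hence $G$ contains a $C_4$ iff $X\cap Y\neq\emptyset$, so a correct algorithm solves \textsc{Set Disjointness} on $\Theta(t^{3/2})$-bit inputs, which requires $\Omega(t^{3/2})$ bits of communication by Razborov's theorem~\cite{KushilevitzNisan}. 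With $e(V_A,V_B)=t$, $n(V_A,V_B)=2t$, and total vertex count $n=2t$, the Simulation Lemma gives $\NA(\mathcal{A}),\EA(\mathcal{A}) = \Omega(\sqrt{t}/\log n) = \Omega(\sqrt{n}/\log n)$, as required.
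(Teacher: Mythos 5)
Your proposal is correct and follows essentially the same route as the paper: a reduction from \textsc{Equality} on $\Theta(n^2)$ bits across the single cut edge for \SYM{}, and a reduction from \textsc{Set Disjointness} on $\Theta(n^{3/2})$ bits using an extremal $C_4$-free host graph and a perfect-matching cut for \CFF{}, both converted to activation lower bounds via Lemma~\ref{lem:simu}. Your explicit case analysis on the number of cut edges in a putative $C_4$ just spells out what the paper states more tersely, and the final arithmetic matches the claimed bounds.
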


\begin{proof}
In problem \textsc{Equality}, two players Alice and Bob have a boolean vector of size $k$, $x_A$ for Alice and $x_B$ for Bob. Their goal is to answer \emph{true} if $x_A = x_B$,  and \emph{false} otherwise. The communication complexity of this problem is known to be~$\Theta(k)$~\cite{KushilevitzNisan}.
Let $k = n^2$. We can interpret $x_A$ and $x_B$ as the adjacency matrix of two graphs $G_A$ and $G_B$ in an instance of $\SYM$. It is a mere technicality to "shift" $G_B$ as if its vertices were indexed from $1$ to~$n$, such that \SYM{} is true for $G$ iff $x_A = x_B$. Moreover, Alice can construct $G_A$ from its input $x_A$, and Bob can construct $G_B$ from $x_B$. Both can simulate the unique edge joining the two graphs in $G$. Therefore, by Lemma~\ref{lem:simu} applied to $G$, if Alice controls the vertices of $G_A$, and Bob controls the vertices of $G_B$, then any CONGEST algorithm $\mathcal{A}$ solving \SYM{} in polynomially many rounds yields a two-party protocol for \textsc{Equality} on $n^2$ bits. Since graphs $G_A$ and $G_B$ are linked by a unique edge, the total communication of the protocol is $O(\EA(\mathcal{A}) \cdot \log n)$ and  $O(\NA(\mathcal{A})\cdot \log n)$. The result follows.

\medskip 

In  \textsc{Set Disjointness}, each of the two players Alice and Bob has a Boolean vector of size $k$, $x_A$ for Alice, and $x_B$ for Bob. Their goal is to answer \emph{true} if there is no index $i \in [k]$ such that both $x_A[i]$ and $x_B[i]$ are true (in which case, $x_A$ and $x_B$ are disjoint),  and \emph{false} otherwise. The communication complexity of this problem is known to be~$\Theta(k)$~\cite{KushilevitzNisan}.
We use the technique in~\cite{DruckerKO14} to construct an  instance $G$ for $C_4$ freeness, with a small cut, from two Boolean vectors $x_A, x_B$ of size $k = \Theta(n^{3/2})$. Consider a $C_4$-free $n$-vertex graph~$H$ with a maximum number of edges. Such a graph has $k = \Theta(n^{3/2})$ edges, as recalled in~\cite{DruckerKO14}. We can consider the edges $E(H)$ as indexed from $1$ to~$k$, and $V(H)$ as $[n]$. Let now $x_A$ and $x_B$ be two Boolean vectors of size~$k$. These vectors can be interpreted as edge subsets $E(x_A)$ and $E(x_B)$ of~$H$, in the sense that the edge indexed $i$ in $E(H)$ appears in $E(x_A)$ (resp. $E(x_B)$) iff $x_A[i]$ (resp. $x_B[i]$) is true. Graph $G$ is constructed to have $2n$ vertices, formed by two sub-graphs $G_A = G[\{1,\dots, n\}]$ and $G_B = G[\{n+1,\dots, 2n\}]$. 
The edges of $E(G_A)$ are exactly the ones of $E(x_A)$. Similarly, the edges of $E(G_B)$ correspond to $E(x_A)$, modulo the fact that the vertex indexes are shifted by~$n$, i.e., for each edge $\{u,v\} \in E(x_B)$, we add edge  $\{u+n,v+n\}$ to~$G_B$. Moreover we add a perfect matching to $G$, between $V(G_A)$ and $V(G_B)$, by adding all edges $\{i,i+n\}$, for all $i \in [n]$. Note that $G$ is $C_4$-free if and only if vectors $x_A$ and $x_B$ are disjoint. Indeed, since $G_A, G_B$ are isomorphic to sub-graphs of $H$, they are $C_4$-free. Thus any $C_4$ of $G$ must contain two vertices in $G_A$ and two in $G_B$, in which case the corresponding edges in $G_A$ and $G_B$ designate the same bit of $x_A$ and $x_B$ respectively.
Moreover Alice and Bob can construct $G_A$ and $G_B$, as well as the edges in the matching, from their respective inputs $x_A$ and~$x_B$.  Therefore, thanks to Lemma~\ref{lem:simu}, a CONGEST algorithm $\mathcal{A}$ for \CFF{} running in a polynomial number of rounds can be used to design a protocol $\mathcal{P}$ solving \textsc{Set Disjointness} on $k = \Theta(n^{3/2})$ bits, where Alice controls $V(G_A)$ and Bob controls $V(G_B)$. The communication complexity of the protocol is $O(\EA(\mathcal{A})\cdot n\cdot \log n)$, and $O(\NA(\mathcal{A}) \cdot n\cdot \log n)$, since the cut between $G_A$ and $G_B$ is a matching. The result follows. 
\qed    
\end{proof}

%%%%%%%%%%%%%%%%%%%%%%%%%%%%%%%%%%%%%
\section{Node versus Edge Activation}
%%%%%%%%%%%%%%%%%%%%%%%%%%%%%%%%%%%%%

In this section we exhibit a problem that admits an edge-frugal CONGEST algorithm running in a polynomial number of rounds,  for which any algorithm running in a polynomial number of rounds has large node-activation complexity.

We proceed by reduction from a two-party communication complexity problem. However, unlike the previous section, we are now also interested in the number of rounds of the two-party protocols. We consider protocols in which the two players Alice and Bob do not communicate simultaneously. For such a protocol~$\mathcal{P}$, a \emph{round} is defined as a maximal contiguous sequence of messages emitted by a same player. We denote by $R(\mathcal{P})$ the number of rounds of~$\mathcal{P}$. 

Let \(G\) be a graph, and \(S\) be a subset of nodes of~$G$. We denote by \(\partial S\) the number of vertices in \(S\) with a neighbor in \(V\setminus S\). 

\begin{lemma}[Round-Efficient Simulation lemma]\label{lem:simu2}
    Let  \(\mathcal{A}\) be an algorithm in the CONGEST model, let \(\mathcal{I} = (G,\id,x)\) be an input for~$\mathcal{A}$, and let \(V_A, V_B\) be a partition of~\(V(G)\).  Let us assume that Alice controls all the nodes in~\(V_A\), and Bob controls all the nodes in \(V_B\), and both players know the value of \(\NA(\mathcal{A})\).  Then, there exists a communication protocol \(\mathcal{P}\) between Alice and Bob such that, in at most \(\min(\partial V_A, \partial V_B)\cdot \NA(\mathcal{A})\) rounds, and using total communication \(O\Big(\big((\partial(V_A) + \partial(V_B)) \cdot \NA(\mathcal{A})\big)^2 \cdot (\log n + \log R_{\mathcal{A}}(n))\Big)  \) bits, each player computes the value of \(\mathcal{A}(\mathcal{I})\) at all the nodes he or she controls. 
\end{lemma}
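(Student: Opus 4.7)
Assume without loss of generality that $\partial V_A \le \partial V_B$. The plan is to design an asymmetric protocol driven by Alice's boundary activations: Alice will reveal her cross-cut events one at a time, while Bob will consolidate all the Bob-to-Alice cross-cut events falling between two consecutive Alice events into a single burst -- hence a single round -- of the protocol. Since Alice has at most $\partial V_A \cdot \NA(\mathcal{A})$ boundary activations, this will yield $O(\min(\partial V_A,\partial V_B) \cdot \NA(\mathcal{A}))$ rounds.

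I would maintain an \emph{agreed prefix}: after phase $p$, both players are committed to the same simulation of $\mathcal{A}$ up to some round $\tau_p$, with $\tau_0 = 0$. The first step is for Bob to send Alice his \emph{oblivious script} $\sigma_B^{(0)}$, namely the complete list of cross-cut messages Bob would emit under the assumption that Alice never sends anything. Alice then runs her own simulation, incorporating $\sigma_B^{(0)}$. Because Alice emits nothing across the cut strictly before her first cross-cut event $r_A^{(1)}$, Bob's oblivious script is in fact correct on the whole interval $(0, r_A^{(1)})$, and hence $r_A^{(1)}$ (with its attached content) is exactly the true first Alice-to-Bob cross-cut event. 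She sends it to Bob. In each subsequent phase $p$, Bob folds Alice's latest event into his state, reruns his oblivious continuation from $r_A^{(p-1)}$, and sends the resulting batch to Alice as a single burst; because Alice emits nothing across the cut on $(r_A^{(p-1)}, r_A^{(p)})$, this batch correctly identifies the true Bob cross-cut events on that interval, so Alice's next oblivious event $r_A^{(p)}$ is true as well. A straightforward induction on $p$ establishes correctness up to termination, which both players can detect via their simulations because they know $\NA(\mathcal{A})$ a priori.

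For the complexity, the total number of rounds is $2\partial V_A \NA(\mathcal{A}) + O(1) = O(\min(\partial V_A,\partial V_B) \NA(\mathcal{A}))$. Each of Bob's bursts carries at most $O(\partial V_B \NA(\mathcal{A}))$ cross-cut messages, each of size $O(\log n + \log R_{\mathcal{A}}(n))$ bits (to encode the round number and the message content). Multiplying the number of Bob bursts by their maximal size gives a total of $O\bigl(\partial V_A \partial V_B \NA(\mathcal{A})^2 (\log n + \log R_{\mathcal{A}}(n))\bigr)$ bits, which by AM--GM is bounded by the claimed $O\bigl(((\partial V_A + \partial V_B)\NA(\mathcal{A}))^2 (\log n + \log R_{\mathcal{A}}(n))\bigr)$.

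The main obstacle will be formalising the inductive invariant, namely that Bob's batch in phase $p$ coincides exactly with the set of true Bob-to-Alice cross-cut events on $(r_A^{(p-1)}, r_A^{(p)})$. This rests on the synchronous-round convention (a round-$r$ message depends only on the states at the end of round $r-1$) together with the observation that Alice emits no cross-cut message in that interval. The edge case of a Bob event occurring exactly at $r_A^{(p)}$ will require a careful ordering argument in the spirit of the six-case analysis of Lemma~\ref{lem:simu}, and the hypothesis that both players know $\NA(\mathcal{A})$ will be essential for bounding the number of oblivious-continuation events Bob enumerates, so that each burst stays within the round budget.
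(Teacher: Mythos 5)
Your protocol is correct and follows essentially the same route as the paper's own proof: both advance an agreed simulation prefix phase by phase by exchanging timestamped \emph{oblivious scripts} of cross-cut messages, bound the number of phases by the boundary activations of one side, and use the a priori knowledge of $\NA(\mathcal{A})$ to trim the speculative scripts to at most $\partial V_B\cdot\NA(\mathcal{A})$ entries. The only differences are cosmetic: the paper has the two players send full oblivious scripts on alternating turns (rather than your asymmetric one-event-at-a-time Alice, which yields the marginally sharper $O(\partial V_A\,\partial V_B\,\NA(\mathcal{A})^2)$ message count), and it disposes of the simultaneity edge case you flag by assuming, at the cost of a factor $N$ in the round complexity of $\mathcal{A}$, that distinct nodes send messages at distinct rounds.
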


\begin{proof}
In protocol \(\mathcal{P}\), Alice and Bob simulate the rounds of algorithm \(\mathcal{A}\)  at all the nodes each player controls. Without loss of generality, we assume that algorithm \(\mathcal{A}\) satisfies that the nodes send messages at different rounds, by merely multiplying by \(N\) the number of rounds. 

Initially,  Alice runs an oblivious simulation of \(\mathcal{A}\) that stops in one of the following three cases: 
\begin{enumerate}
\item Every node in \(V_A\)  has terminated; 
\item Every node in \(V_A\) entered into the passive state that it may leave only after having received a message from a node in~\(V_B\) (this corresponds to what we called the ``first scenario'' in the proof of Lemma~\ref{lem:simu}); 
\item The number of rounds \(R_{\mathcal{A}}(n)\) is reached. 
\end{enumerate}

Then, Alice sends to Bob the integer \(t_1 = 0\), and the set \(M^1_A\) of all messages sent from nodes in \(V_A\) to nodes in \(V_B\) in the communication rounds that she simulated, together with their corresponding timestamps. If the number of messages communicated by Alice exceeds \(\NA(\mathcal{A})\cdot \partial A\), we trim the list up to this threshold.

Let us suppose that the protocol \(\mathcal{P}\) has run for \(p\) rounds, and let us assume that  it is the turn of Bob to speak at round \(p+1\) --- the case where Alice speaks at round \(p+1\) can be treated in the same way. Moreover, we assume that \(\mathcal{P}\) satisfies the following two conditions:
\begin{enumerate}
\item At round \(p\), Alice sents an integer \(t_p\geq 0\), and a list of timestamped messages \(M^p_A\) corresponding to messages sent from nodes in \(V_A\) to nodes in \(V_B\) in an oblivious simulation of \(\mathcal{A}\) starting from a round \(t_p\). 
\item Bob had correctly simulated \(\mathcal{A}\) at all the nodes he controls, up to round \(t_p\). 
\end{enumerate}

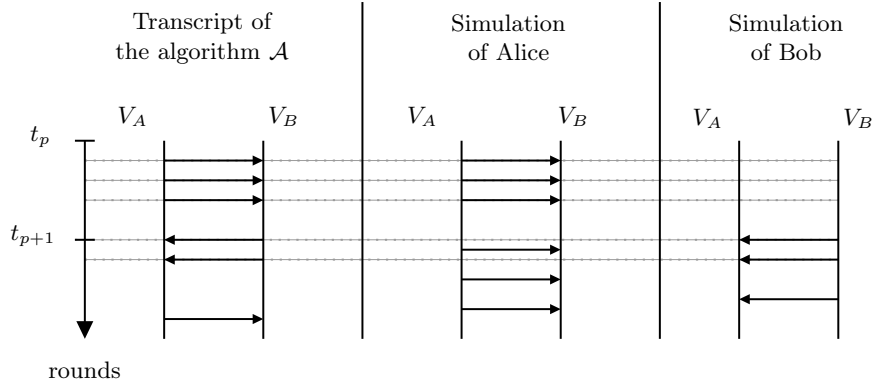
\begin{figure}[!h]
    \centering
    \tikzset{every picture/.style={line width=0.75pt}} %set default line width to 0.75pt        

\begin{tikzpicture}[x=0.75pt,y=0.75pt,yscale=-1,xscale=1]
%uncomment if require: \path (0,611); %set diagram left start at 0, and has height of 611

%Straight Lines [id:da15428460962904478] 
\draw [color={rgb, 255:red, 155; green, 155; blue, 155 }  ,draw opacity=1 ][fill={rgb, 255:red, 155; green, 155; blue, 155 }  ,fill opacity=1 ] [dash pattern={on 0.84pt off 2.51pt}]  (165,200) -- (424.72,200) -- (550,200) ;
%Straight Lines [id:da9069589709183115] 
\draw [color={rgb, 255:red, 155; green, 155; blue, 155 }  ,draw opacity=1 ][fill={rgb, 255:red, 155; green, 155; blue, 155 }  ,fill opacity=1 ] [dash pattern={on 0.84pt off 2.51pt}]  (170,210) -- (427.36,210) -- (550,210) ;
%Straight Lines [id:da8747849798331826] 
\draw [color={rgb, 255:red, 155; green, 155; blue, 155 }  ,draw opacity=1 ][fill={rgb, 255:red, 155; green, 155; blue, 155 }  ,fill opacity=1 ] [dash pattern={on 0.84pt off 2.51pt}]  (170,160) -- (427.36,160) -- (550,160) ;
%Straight Lines [id:da600334935561036] 
\draw [color={rgb, 255:red, 155; green, 155; blue, 155 }  ,draw opacity=1 ][fill={rgb, 255:red, 155; green, 155; blue, 155 }  ,fill opacity=1 ] [dash pattern={on 0.84pt off 2.51pt}]  (170,170) -- (427.36,170) -- (550,170) ;
%Straight Lines [id:da7619232567676585] 
\draw [color={rgb, 255:red, 155; green, 155; blue, 155 }  ,draw opacity=1 ][fill={rgb, 255:red, 155; green, 155; blue, 155 }  ,fill opacity=1 ] [dash pattern={on 0.84pt off 2.51pt}]  (170,180) -- (427.36,180) -- (550,180) ;
%Straight Lines [id:da5527958454879452] 
\draw    (170,150) -- (170,247) ;
\draw [shift={(170,250)}, rotate = 270] [fill={rgb, 255:red, 0; green, 0; blue, 0 }  ][line width=0.08]  [draw opacity=0] (8.93,-4.29) -- (0,0) -- (8.93,4.29) -- cycle    ;
%Straight Lines [id:da7429880774321673] 
\draw    (210,160) -- (257,160) ;
\draw [shift={(260,160)}, rotate = 180] [fill={rgb, 255:red, 0; green, 0; blue, 0 }  ][line width=0.08]  [draw opacity=0] (5.36,-2.57) -- (0,0) -- (5.36,2.57) -- cycle    ;
%Straight Lines [id:da7890480070452626] 
\draw    (210,170) -- (257,170) ;
\draw [shift={(260,170)}, rotate = 180] [fill={rgb, 255:red, 0; green, 0; blue, 0 }  ][line width=0.08]  [draw opacity=0] (5.36,-2.57) -- (0,0) -- (5.36,2.57) -- cycle    ;
%Straight Lines [id:da5985610237521553] 
\draw    (213,200) -- (260,200) ;
\draw [shift={(210,200)}, rotate = 0] [fill={rgb, 255:red, 0; green, 0; blue, 0 }  ][line width=0.08]  [draw opacity=0] (5.36,-2.57) -- (0,0) -- (5.36,2.57) -- cycle    ;
%Straight Lines [id:da877051512166053] 
\draw    (165,200) -- (175,200) ;
%Straight Lines [id:da9476364531074187] 
\draw [color={rgb, 255:red, 0; green, 0; blue, 0 }  ,draw opacity=1 ]   (503,210) -- (550,210) ;
\draw [shift={(500,210)}, rotate = 0] [fill={rgb, 255:red, 0; green, 0; blue, 0 }  ,fill opacity=1 ][line width=0.08]  [draw opacity=0] (5.36,-2.57) -- (0,0) -- (5.36,2.57) -- cycle    ;
%Straight Lines [id:da7380516549143493] 
\draw    (310,80) -- (310,250) ;
%Straight Lines [id:da2789530270889087] 
\draw    (460,80) -- (460,250) ;
%Straight Lines [id:da9195486044609776] 
\draw    (165,150) -- (175,150) ;
%Straight Lines [id:da7927773674487685] 
\draw    (210,150) -- (210,250) ;
%Straight Lines [id:da654592307379267] 
\draw    (260,150) -- (260,250) ;
%Straight Lines [id:da7350349304778622] 
\draw    (500,150) -- (500,250) ;
%Straight Lines [id:da21888381948845903] 
\draw    (550,150) -- (550,250) ;
%Straight Lines [id:da33305162888348017] 
\draw    (213,210) -- (260,210) ;
\draw [shift={(210,210)}, rotate = 0] [fill={rgb, 255:red, 0; green, 0; blue, 0 }  ][line width=0.08]  [draw opacity=0] (5.36,-2.57) -- (0,0) -- (5.36,2.57) -- cycle    ;
%Straight Lines [id:da5698454611403605] 
\draw    (210,180) -- (257,180) ;
\draw [shift={(260,180)}, rotate = 180] [fill={rgb, 255:red, 0; green, 0; blue, 0 }  ][line width=0.08]  [draw opacity=0] (5.36,-2.57) -- (0,0) -- (5.36,2.57) -- cycle    ;
%Straight Lines [id:da9548300374089067] 
\draw [color={rgb, 255:red, 0; green, 0; blue, 0 }  ,draw opacity=1 ]   (503,200) -- (550,200) ;
\draw [shift={(500,200)}, rotate = 0] [fill={rgb, 255:red, 0; green, 0; blue, 0 }  ,fill opacity=1 ][line width=0.08]  [draw opacity=0] (5.36,-2.57) -- (0,0) -- (5.36,2.57) -- cycle    ;
%Straight Lines [id:da48236929629589775] 
\draw [color={rgb, 255:red, 0; green, 0; blue, 0 }  ,draw opacity=1 ]   (503,230) -- (550,230) ;
\draw [shift={(500,230)}, rotate = 0] [fill={rgb, 255:red, 0; green, 0; blue, 0 }  ,fill opacity=1 ][line width=0.08]  [draw opacity=0] (5.36,-2.57) -- (0,0) -- (5.36,2.57) -- cycle    ;
%Straight Lines [id:da20114345622215235] 
\draw    (210,240) -- (257,240) ;
\draw [shift={(260,240)}, rotate = 180] [fill={rgb, 255:red, 0; green, 0; blue, 0 }  ][line width=0.08]  [draw opacity=0] (5.36,-2.57) -- (0,0) -- (5.36,2.57) -- cycle    ;
%Straight Lines [id:da16693198198019843] 
\draw    (360,170) -- (407,170) ;
\draw [shift={(410,170)}, rotate = 180] [fill={rgb, 255:red, 0; green, 0; blue, 0 }  ][line width=0.08]  [draw opacity=0] (5.36,-2.57) -- (0,0) -- (5.36,2.57) -- cycle    ;
%Straight Lines [id:da8204201997019831] 
\draw    (360,205) -- (407,205) ;
\draw [shift={(410,205)}, rotate = 180] [fill={rgb, 255:red, 0; green, 0; blue, 0 }  ][line width=0.08]  [draw opacity=0] (5.36,-2.57) -- (0,0) -- (5.36,2.57) -- cycle    ;
%Straight Lines [id:da6346922679212968] 
\draw    (360,150) -- (360,250) ;
%Straight Lines [id:da2674463369296336] 
\draw    (410,150) -- (410,250) ;
%Straight Lines [id:da08191874891540107] 
\draw    (360,220) -- (407,220) ;
\draw [shift={(410,220)}, rotate = 180] [fill={rgb, 255:red, 0; green, 0; blue, 0 }  ][line width=0.08]  [draw opacity=0] (5.36,-2.57) -- (0,0) -- (5.36,2.57) -- cycle    ;
%Straight Lines [id:da2916025564386837] 
\draw    (360,180) -- (407,180) ;
\draw [shift={(410,180)}, rotate = 180] [fill={rgb, 255:red, 0; green, 0; blue, 0 }  ][line width=0.08]  [draw opacity=0] (5.36,-2.57) -- (0,0) -- (5.36,2.57) -- cycle    ;
%Straight Lines [id:da46432942000931665] 
\draw    (360,235) -- (407,235) ;
\draw [shift={(410,235)}, rotate = 180] [fill={rgb, 255:red, 0; green, 0; blue, 0 }  ][line width=0.08]  [draw opacity=0] (5.36,-2.57) -- (0,0) -- (5.36,2.57) -- cycle    ;
%Straight Lines [id:da8248090810846554] 
\draw    (360,160) -- (407,160) ;
\draw [shift={(410,160)}, rotate = 180] [fill={rgb, 255:red, 0; green, 0; blue, 0 }  ][line width=0.08]  [draw opacity=0] (5.36,-2.57) -- (0,0) -- (5.36,2.57) -- cycle    ;

% Text Node
\draw (170,265.5) node  [font=\small] [align=left] {\begin{minipage}[lt]{27.54pt}\setlength\topsep{0pt}
rounds
\end{minipage}};
% Text Node
\draw (185,131.4) node [anchor=north west][inner sep=0.75pt]    {$V_{A}$};
% Text Node
\draw (261,131.4) node [anchor=north west][inner sep=0.75pt]    {$V_{B}$};
% Text Node
\draw (475,132.4) node [anchor=north west][inner sep=0.75pt]    {$V_{A}$};
% Text Node
\draw (551,132.4) node [anchor=north west][inner sep=0.75pt]    {$V_{B}$};
% Text Node
\draw (331,131.4) node [anchor=north west][inner sep=0.75pt]    {$V_{A}$};
% Text Node
\draw (407,131.4) node [anchor=north west][inner sep=0.75pt]    {$V_{B}$};
% Text Node
\draw (181,84) node [anchor=north west][inner sep=0.75pt]  [font=\small] [align=left] {\begin{minipage}[lt]{69.93pt}\setlength\topsep{0pt}
\begin{center}
Transcript of\\the algorithm $\displaystyle \mathcal{A}$
\end{center}

\end{minipage}};
% Text Node
\draw (351,85) node [anchor=north west][inner sep=0.75pt]  [font=\small] [align=left] {\begin{minipage}[lt]{46.89pt}\setlength\topsep{0pt}
\begin{center}
Simulation\\of Alice
\end{center}

\end{minipage}};
% Text Node
\draw (141,141.4) node [anchor=north west][inner sep=0.75pt]    {$t_{p}$};
% Text Node
\draw (491,85) node [anchor=north west][inner sep=0.75pt]  [font=\small] [align=left] {\begin{minipage}[lt]{46.89pt}\setlength\topsep{0pt}
\begin{center}
Simulation\\of Bob
\end{center}

\end{minipage}};
% Text Node
\draw (131,191.4) node [anchor=north west][inner sep=0.75pt]    {$t_{p+1}$};

\end{tikzpicture}
    \caption{Illustration of the round-efficient simulation protocol for algorithm \(\mathcal{A}\). After round \(p\), Alice has correctly simulated the algorithm up to round \(t_p\). It is the turn of Bob to speak in round \(p+1\). In round \(p\), Alice sent to Bob the set of messages \(M^p_A\), obtained from an oblivious simulation of \(\mathcal{A}\) starting from \(t_p\). Only the first three messages are correct, since at round \(t_{p+1}\) Bob communicates a message to Alice. Then, Bob runs an oblivious simulation of \(\mathcal{A}\) starting from \(t_{p+1}\), and communicates all the messages sent from nodes \(V_B\) to nodes in \(V_A\).  In this case the two first messages are correct. }
    \label{fig:efficientsimulationround}
\end{figure}

We now describe round \(p+1\) (see also Figure~\ref{fig:efficientsimulationround}). 
Bob initiates a simulation of \(\mathcal{A}\) at all the nodes he controls. However, this simulation is \emph{not} oblivious. Specifically, Bob simulates \(\mathcal{A}\) from round \(t_p\) taking into account all the messages sent from nodes in \(V_A\) to nodes in \(V_B\), as listed in the messages~\(M^{p}_A\). The simulation stops when Bob reaches a round \(t_{p+1}>t_p\) at which a node in \(V_B\) sends a message to a node in \(V_A\). Observe that, up to round \(t_{p+1}\), the oblivious simulation of Alice was correct. At this point, Bob initiates an oblivious simulation of \(\mathcal{A}\) at all the nodes he controls, starting from \(t_{p+1}\). Finally, Bob sends to Alice \(t_{p+1}\), and the list \(M^{p+1}_B\) of all timestamped messages sent from nodes in \(V_B\) to nodes in \(V_A\) resulting from the oblivious simulation of the nodes he controls during rounds at least \(t_{p+1}\). Using this information, Alice infers that her simulation was correct up to round \(t_{p+1}\), and she starts the next round for protocol~$\mathcal{P}$. 

The simulation carries on until one of the two players runs an oblivious simulation in which all the nodes he or she controls terminate, and no messages were sent through the cut in at any intermediate round. In this case, this player sends a message "\emph{finish}" to the other player, and both infer that their current simulations are correct. As  a consequence, each player has correctly computed the output of \(\mathcal{A}\) at all the nodes he or she controls.

At every communication round during which Alice speaks, at least one vertex of \(V_A\) which has a neighbor in \(V_B\) is activated. Therefore, the number of rounds of Alice is at most \(\partial V_A \cdot \NA(\mathcal{A})\). By the same argument, we have that the number of rounds of Bob is at most \(\partial V_B \cdot \NA(\mathcal{A})\). It follows that 
\[
R(\mathcal{P}) = \min(\partial V_A,\partial V_B) \cdot \NA(\mathcal{A}).
\]
At each communication round, Alice sends at most \(\partial(V_A) \cdot \NA(\mathcal{A})\) timestamped messages, which can be encoded using \(O\big(\partial(V_A) \cdot \NA(\mathcal{A}) \cdot (\log n + \log R_{\mathcal{A}}(n))\big) \) bits. Similarly, Bob sends \(O\big (\partial(V_B) \cdot \NA(\mathcal{A}) \cdot (\log n + \log R_{\mathcal{A}}(n)) \big) \) bits. It follows that 
\[
C(\mathcal{P}) = O\Big(\big((\partial(V_A) + \partial(V_B)) \cdot \NA(\mathcal{A})\big)^2 \cdot (\log n + \log R_{\mathcal{A}}(n))\Big),
\]
which completes the proof. 
\qed
\end{proof}

In order to separate the node-activation complexity from the edge-activation complexity, we consider a problem called \DFPC, and we show that this problem can be solved by an edge-frugal CONGEST algorithm running in $O(\mbox{poly}(n))$ rounds, whereas the node-activation complexity of any algorithm running in $O(\mbox{poly}(n))$ rounds for this problem is $\Omega(\Delta)$, for any $\Delta \in O(\UpB)$. The lower bound is proved thanks to the Round-Efficient Simulation Lemma (Lemma~\ref{lem:simu}), by reduction from the two-party communication complexity problem \PC{}, for which too few rounds imply large communication complexity~\cite{NisanW93}.

In the \DFPC, each node $v$ of the graph is given as input its index $\mbox{DFS}(v)\in [n]$ in a depth-first search ordering (as usual we denote $[n]=\{1,\dots,n\}$). Moreover the vertex indexed~$i$ is given a function $f_i:[n] \to [n]$, and the root (i.e., the node indexed~1) is given a value $x \in [n]$ as part of its input. The goal is to compute the value of $f_n \circ f_{n-1} \circ \dots \circ f_1(x)$ at the root. 

\begin{lemma}\label{lem:algoforDFPC}
There exists an edge-frugal CONGEST algorithm for problem \DFPC, with polynomial number of rounds.
\end{lemma}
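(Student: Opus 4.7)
The plan is to proceed in two short phases: a setup phase that reconstructs the DFS tree from the given DFS indices, and a token-passing phase that applies $f_1,\dots,f_n$ in order by walking that tree.

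In Round~$1$, every node sends its DFS index to all its neighbors. In Round~$2$, each node $v$ identifies its DFS parent $p(v)$ as the neighbor $u$ that maximizes $\mathrm{DFS}(u)$ among those with $\mathrm{DFS}(u)<\mathrm{DFS}(v)$, and sends a short ``I am your child'' notification to $p(v)$. The correctness of this rule relies on the standard fact that a DFS on an undirected graph produces only tree edges and back edges to ancestors, so every neighbor of $v$ with smaller DFS index lies on the root-to-$v$ path of the DFS tree; the deepest such ancestor---which is $p(v)$---therefore carries the largest such DFS index. After these two rounds, each node knows $p(v)$ together with the DFS indices of its children in the DFS tree. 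Every edge is activated in Round~$1$, and in addition each DFS-tree edge is activated in Round~$2$.

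In Phase~B, all non-root nodes become passive and wake up only upon receiving a token. The root (the node of DFS index $1$), which starts Phase~B at Round~$3$, computes $y_1=f_1(x)$ and sends $y_1$ to its child of DFS index~$2$. A node $v$ of DFS index $i$ with children $c_1,\dots,c_k$ (ordered by increasing DFS index) behaves as follows: on first receipt of a token, necessarily from $p(v)$ carrying a value $y$, it computes $y'=f_i(y)$ and sends $y'$ to $c_1$; each later receipt necessarily comes from some $c_j$, and then $v$ forwards the received value to $c_{j+1}$ if $j<k$ and to $p(v)$ otherwise. The root outputs the final value when the token returns from its last child; a straightforward induction on the DFS tree shows that this value equals $f_n\circ\cdots\circ f_1(x)$.

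For the accounting, the token traces an Eulerian tour of the DFS tree: each tree edge is used exactly twice (once downward, once upward) and no non-tree edge is ever used, so Phase~B adds at most two activations per tree edge and none to non-tree edges. Combined with Phase~A this gives at most four activations per edge, hence edge-frugality. The tour length is $2(n-1)$, giving $O(n)=O(\mathrm{poly}(n))$ rounds overall. All messages carry only a DFS index, a child-acknowledgment, or a value in $[n]$, and so fit on $O(\log n)$ bits, complying with CONGEST. The only mildly subtle point is the parent-identification rule of Phase~A; the rest is a routine token-driven simulation of DFS.
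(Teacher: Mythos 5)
Your proof is correct and follows essentially the same route as the paper's: learn neighbors' DFS indices in one round, then route a single token along an Euler tour of the DFS tree, applying $f_i$ on first arrival at the node of index $i$, giving $O(1)$ activations per edge and $O(n)$ rounds. Your explicit parent rule (neighbor with the largest DFS index below one's own) and the extra child-notification round are in fact slightly more careful than the paper, which asserts without detail that one round suffices for every node to learn both its parent and its children; the only nit is the unstated leaf case, where a node with no children must immediately return $f_i(y)$ to its parent.
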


% The detailed proof can be found in Appendix~\ref{app:lem:algoforDFPC}.

\begin{proof}
The lemma is established using an algorithm that essentially traverses the DFS tree encoded by the indices of the nodes, and performs the due partial computation of the function at every node, that is, the node with index~$i$ computes $f_i \circ f_{i-1} \dots f_1 (x)$, and forwards the result to the node with index~$i+1$.

At round 1, each node $v$ transmits its depth-first search index $\mbox{DFS}(v)$ to its neighbors. Therefore, after this round, every node knows its parent, and its children in the DFS tree. Then the algorithm merely forwards messages of type $m(i) = f_i \circ f_{i-1} \dots f_1 (x)$, corresponding to iterated computations for increasing values~$i$, along the DFS tree, using the DFS ordering. That is, for any node $v$, let $\mbox{MaxDFS}(v)$ denote the maximum DFS index appearing in the subtree of the DFS tree rooted at $v$. We will not explicitly compute this quantity but it will ease the notations. At some round, vertex $v$ of DFS index $i$ will receive a message $m(i-1)$ from its parent (of index $i-1$). Then node $v$ will be in charge of computing message $m(\mbox{MaxDFS}(v))$, by ``calling'' its children in the tree, and sending this message back to its parent. In this process, each edge in the subtree rooted at $v$ is activated twice.

The vertex of DFS index~1 initiates the process at round~2, sending $f_1(x)$ to its child of DFS index~$2$. Any other node~$v$ waits until it receives a message from its parent, at a round that we denote~$r(v)$. This  message is precisely $m(i-1) = f_{i-1} \circ f_{i-2} \dots f_1 (x)$, for $i = \mbox{DFS}(v)$. Then $v$ computes message 
$m(i) = f_{i} \circ f_{i-1} \dots f_1 (x)$ using its local function $f_i$. If it has no children, then it sends  this message $m(i)$ to its parent at round $r(v)+1$. Assume now that $v$ has $j$ children in the DFS tree, denoted  $u_1,u_2,\dots,u_j$, sorted by increasing DFS index. Observe that, by definition of DFS trees, $\mbox{DFS}(u_k) = \mbox{MaxDFS}(u_{k-1})+1$ for each $k \in \{2,\dots,j\}$. 
Node~$v$ will be activated $j$ times, once for each edge $\{v,u_k\}$, $1 \leq k \leq j$, as follows. 
At round $r(v)+1$ (right after receiving the message from its parent), $v$~sends message $m(i)$ to its child $u_1$, then it awaits until round $r^1(v)$ when it gets back a message from $u_1$. 

The process is repeated for $k=2,\dots,j$: at round $r^{k-1}(v)+1$, node $v$ sends the message $m(\mbox{DFS}(u_{k})-1)$ received from $u_{k-1}$ to $u_k$, and waits until it gets back  a message from $u_k$, at round $r^k(v)$. Note that if $k<j$ then this message is $m(\mbox{DFS}(u_{k+1})-1)$, and if $k=j$ then this message is $m(\mbox{MaxDFS}(v))$. At round $r^j(v)+1$, after having received messages from all its children, $v$~backtracks message $m(\mbox{MaxDFS}(v))$ to its parent. If $v$ is the root, then the process stops.

The process terminates in $O(n)$ rounds, and, except for the first round,  every edge of the DFS tree is activated twice: first, going downwards, from the root towards the leaves, and, second, going upwards. At the end, the root obtains the requested message $m(n) = f_{n} \circ f_{n-1} \dots f_1 (x)$.
\qed
\end{proof}

\medbreak

Let us recall the \PC{} problem as defined in~\cite{NisanW93}.  
Alice is given a function $f_A : [n] \to [n]$, and a number $x_0 \in [n]$. Bob is given function $f_B : [n] \to [n]$. Both players have a parameter $k \in [n]$. Note that the size of the input given to each player is $\Theta(n \log n)$ bits.
The goal is to compute $(f_A \circ f_B)^k(x_0)$, i.e., $k$ successive iterations of $f_A \circ f_B$ applied to~$x_0$. We give a slightly simplified version of the result in~\cite{NisanW93}.

\begin{lemma}[Nissan and Wigderson \cite{NisanW93}]
\label{lem:PointerChasing}
Any two-party protocol for \PC{} using less than $2k$ rounds has communication complexity $\Omega(n - k \log n)$.
\end{lemma}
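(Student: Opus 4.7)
I would prove this via the classical round-elimination technique of Nisan--Wigderson. By Yao's minimax principle, it suffices to exhibit a distribution $\mu$ under which every deterministic $r$-round protocol with $r \le 2k-1$ either has constant error or uses $\Omega(n-k\log n)$ bits. Take $\mu$ uniform over pairs $(f_A,f_B) \in [n]^{[n]} \times [n]^{[n]}$, with $x_0$ fixed. Under $\mu$, the pointer chain $y_0 = x_0$, $y_{2i-1} = f_B(y_{2i-2})$, $y_{2i} = f_A(y_{2i-1})$ behaves like a random walk in $[n]$: conditioned on the past, each $y_{i+1}$ is uniform on $[n]$ as long as the relevant function has not yet been ``queried'' at $y_i$ by the transcript. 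Intuitively, Alice knows $f_A$ but not $y_{2i-1}$, while Bob knows $f_B$ but not $y_{2i}$, so the two players must ``ping-pong'' information across $2k$ rounds to traverse the chain; skipping a round forces one of them to speak about a value they cannot predict.

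\textbf{Induction by round elimination.} I would prove by induction on $r$ that any $r$-round protocol for $\mathsf{PC}_k$ under $\mu$ with error at most $1/3$ requires communication $\Omega(n - (2k-r)\log n)$. For the inductive step, suppose Alice speaks in the first round with $c_1$ bits. By averaging, fix her message to a most-popular value, restricting the input support to a $2^{-c_1}$-fraction. This yields an $(r-1)$-round protocol for the same problem on a restricted distribution, where only $O(c_1/\log n)$ individual values of $f_A$ have been effectively pinned down. Crucially, since the chain's next $f_A$-argument $y_1 = f_B(x_0)$ is, under the restricted distribution, still nearly uniform on $[n]$, it misses the pinned set with probability $1 - O(c_1/n)$, so the residual problem is statistically close to $\mathsf{PC}_k$ with one fewer round. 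The base case $r = 0$ is immediate: with no communication, outputting $y_{2k}$ succeeds with probability $O(1/n)$, since $y_{2k}$ remains uniform. Summing the error terms from the $r$ eliminations, correctness forces $c_1 + c_2 + \dots + c_r = \Omega(n)$, up to the $O(k\log n)$ bits that the natural protocol ``consumes for free'' transporting a pointer per round.

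\textbf{Main obstacle.} The principal technical difficulty is making the round-elimination step precise: one must carefully control the posterior distribution of $(f_A,f_B)$ after conditioning on the fixed first message, and argue that the chain $y_0,\dots,y_{2k}$ remains generic in the sense that successive pointers avoid the ``pinned'' values with high probability. This is typically handled by a mutual-information potential argument: define $\Phi_j = I(y_{2k} ; T_j)$ where $T_j$ is the transcript through round $j$, show that $\Phi_r - \Phi_0$ upper-bounds any nontrivial advantage in computing $y_{2k}$, and prove a per-round increment $\Phi_{j+1} - \Phi_j = O(c_{j+1}/n \cdot \log n + \log n)$. Summing over $r \le 2k-1$ rounds and comparing with $H(y_{2k}) \approx \log n$ yields the claimed bound $c = \Omega(n - k\log n)$. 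The $-k\log n$ slack is unavoidable and precisely reflects the $k$ free pointers the natural protocol propagates in its $2k$ rounds.
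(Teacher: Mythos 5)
The paper does not actually prove this lemma: it is imported verbatim (in ``slightly simplified'' form) from Nisan and Wigderson \cite{NisanW93}, so there is no in-paper argument to compare yours against. Judged on its own terms, your sketch follows the standard round-elimination strategy that underlies the original proof --- a hard distribution via Yao, elimination of the first message, and an entropy/information potential tracking how much the transcript reveals about the current pointer --- and the induction hypothesis $\Omega(n-(2k-r)\log n)$ for $r$-round protocols has the right shape, as does the base case.

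The one step that would not survive as written is the accounting in your elimination step: fixing Alice's first message to a popular value restricts the support of $f_A$ to a $2^{-c_1}$-fraction, but this does \emph{not} mean that only $O(c_1/\log n)$ coordinates of $f_A$ are ``pinned down'' --- a $c_1$-bit entropy deficiency could in principle be spread as partial information over all $n$ coordinates. The actual argument (and the one your ``main obstacle'' paragraph gestures at) replaces this count by an averaging claim: the total entropy deficiency $c_1$ is distributed over the $n$ coordinates, so for the (nearly uniform) next pointer $y_1=f_B(x_0)$ the conditional entropy of $f_A(y_1)$ given the message is still $\log n - O(c_1/n + 1/n)$ on average, which is what lets the residual instance be treated as a fresh, shorter \PC{} instance. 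You should also make explicit that the $2k$ alternations force, in any protocol with fewer than $2k$ rounds, some round in which the speaking player does not know the current pointer --- this is where the order of speakers enters and where the $k\log n$ slack (the pointers transported ``for free'') is charged. With those two points made precise, the sketch becomes the Nisan--Wigderson proof.
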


We have now all ingredients for proving the main result of this section. 

\begin{theorem}\label{thm:NodeEdgeSep}
For every $\Delta \in O\left(\UpB\right)$, every CONGEST algorithm solving  \DFPC{} in graphs of maximum degree $\Delta$ with polynomialy many rounds has node-activation complexity~$\Omega(\Delta)$.
\end{theorem}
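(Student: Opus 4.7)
The plan is to prove the lower bound by reducing from the \PC{} problem, combining the Round-Efficient Simulation Lemma (Lemma~\ref{lem:simu2}) with the Nisan--Wigderson bound (Lemma~\ref{lem:PointerChasing}). Suppose, toward a contradiction, that some algorithm $\mathcal{A}$ solves \DFPC{} in $\mbox{poly}(n)$ rounds with $\NA(\mathcal{A}) < c\,\Delta$ for a sufficiently small constant $c > 0$. I would embed a hard \PC{} instance $(f_A, f_B, x_0, k)$ with $k = \Theta(\Delta)$ into a \DFPC{} instance on $n$ vertices of maximum degree $\Delta$, simulate $\mathcal{A}$ across an Alice/Bob cut via Lemma~\ref{lem:simu2}, and obtain a two-party protocol whose round and communication bounds violate Lemma~\ref{lem:PointerChasing}.

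The graph $G$ is a ``star with alternating children'' designed so that every cut edge touches a single Alice boundary vertex. The root $a^*$ (Alice, DFS index $1$) has $\Delta$ children, visited in DFS order $b_1, a_1, b_2, a_2, \ldots$, strictly alternating Bob- and Alice-controlled leaves (taking $\Delta$ odd for clean parity, with $(\Delta+1)/2$ Bob leaves and $(\Delta-1)/2$ Alice leaves). Set $f_{a^*} = f_{a_i} = f_A$, $f_{b_j} = f_B$, and give $a^*$ the input $x_0$. Then the DFS composition $f_n \circ \cdots \circ f_1(x_0)$ equals $(f_B \circ f_A)^k(x_0)$ for $k = (\Delta+1)/2$, so the \DFPC{} output at $a^*$ is exactly the answer to the embedded \PC{} instance (with Alice and Bob playing symmetric roles to the standard setup). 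To reach any target $n$ while preserving maximum degree $\Delta$, I would attach under the Alice leaf $a_1$ a bounded-degree Alice-controlled padding tree of $n - \Delta - 1$ vertices, each assigned the identity function; this is harmless because the padding lies inside a purely Alice region and the identities leave the DFS composition unchanged.

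In this construction, $\partial V_A = 1$ (only $a^*$ touches Bob, because all Alice leaves and all padding vertices are adjacent solely to Alice vertices) and $\partial V_B = \Theta(\Delta)$. Applying Lemma~\ref{lem:simu2}, with Alice controlling $V_A$ and Bob controlling $V_B$, then yields a protocol $\mathcal{P}$ for the embedded \PC{} instance that uses at most $\partial V_A \cdot \NA(\mathcal{A}) < c\,\Delta$ rounds and
\[
O\bigl((\partial V_A + \partial V_B)^2 \cdot \NA(\mathcal{A})^2 \cdot (\log n + \log R_{\mathcal{A}}(n))\bigr) \;=\; O(\Delta^4 \log n)
\]
bits of communication, since $R_{\mathcal{A}}(n) = \mbox{poly}(n)$ and $\NA(\mathcal{A}) = O(\Delta)$. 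For $c$ small enough, $c\,\Delta < 2k$, so Lemma~\ref{lem:PointerChasing} (applied with \PC{} domain $[n]$) forces $\mathcal{P}$ to use $\Omega(n - k\log n) = \Omega(n)$ bits. But when $\Delta = O\bigl((n/\log n)^{1/4}\bigr)$ with a sufficiently small implicit constant, $\Delta^4 \log n = o(n)$, contradicting this lower bound; hence $\NA(\mathcal{A}) = \Omega(\Delta)$.

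The main obstacle I expect is engineering $G$ so that a single Alice boundary vertex already pays for the $\Theta(\Delta)$ Alice/Bob alternations in DFS order needed to embed a long-enough \PC{} instance. Na\"ive caterpillar constructions instead force $\partial V_A = \Theta(k)$, which renders the round bound $\partial V_A \cdot \NA < 2k$ useless for an $\Omega(\Delta)$ lower bound. The star gadget circumvents this precisely because the hub $a^*$ has $\Delta$ cut edges, and interleaving Bob- and Alice-leaf children of $a^*$ converts that one hub into $\Theta(\Delta)$ DFS alternations without creating any other Alice boundary vertex. A minor technicality is to check that our prescribed DFS numbering is realizable as an actual DFS of $G$, which is automatic since $G$ is a tree and our labeling visits the children of $a^*$ and of each padding vertex in a legal order.
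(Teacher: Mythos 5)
Your proposal is correct and follows essentially the same route as the paper: a hub vertex of degree $\Theta(\Delta)$ whose leaf children alternate between Alice- and Bob-controlled nodes carrying $f_A$ and $f_B$, identity-function padding on a purely-Alice path to reach $n$ vertices, the observation that $\partial V_A=1$ while $\partial V_B=\Theta(\Delta)$, and then Lemma~\ref{lem:simu2} combined with Lemma~\ref{lem:PointerChasing} to force $\NA(\mathcal{A})=\Omega(\Delta)$. The only cosmetic difference is that the paper places the padding path before the hub (rooting the DFS at the far end of the path) rather than hanging it under an Alice leaf.
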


\begin{proof}
Let $k$ be the parameter of \PC{} that will be fixed later. The lower bound is established for this specific parameter~$k$. Let us consider an arbitrary instance of \textsc{Pointer Chasing} $f_A, f_B:[n] \to [n]$, and $x_0 \in [n]$, with parameter~$k$. 
We reduce that instance to a particular instance of \textsc{Depth First Pointer Chasing} (see Fig.~\ref{fig:DFPC}). 

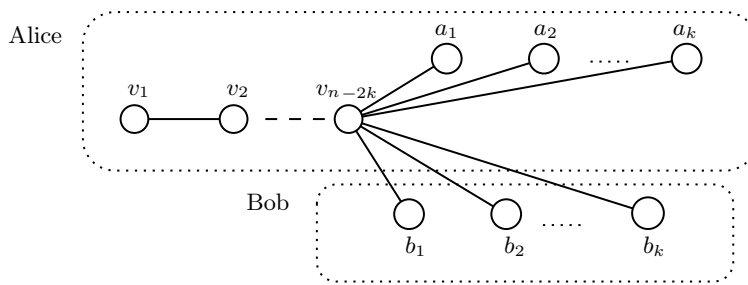
\begin{figure}[!h]
    \centering
    \tikzset{every picture/.style={line width=0.75pt}} %set default line width to 0.75pt        

\begin{tikzpicture}[x=0.75pt,y=0.75pt,yscale=-1,xscale=1]
%uncomment if require: \path (0,446); %set diagram left start at 0, and has height of 446

%Straight Lines [id:da23608547587470718] 
\draw    (447.5,103.5) -- (398,134) ;
%Straight Lines [id:da9924631780443743] 
\draw    (496.5,103.5) -- (398,134) ;
%Straight Lines [id:da9792329924944596] 
\draw    (568.5,103.5) -- (398,134) ;
%Straight Lines [id:da8861086025830585] 
\draw    (398,134) -- (428.5,182) ;
%Straight Lines [id:da687615623363734] 
\draw    (477.5,182) -- (398,134) ;
%Straight Lines [id:da3895632415328252] 
\draw    (549,181.5) -- (398,134) ;
%Shape: Circle [id:dp9410249232052494] 
\draw  [fill={rgb, 255:red, 255; green, 255; blue, 255 }  ,fill opacity=1 ] (421,182) .. controls (421,177.86) and (424.36,174.5) .. (428.5,174.5) .. controls (432.64,174.5) and (436,177.86) .. (436,182) .. controls (436,186.14) and (432.64,189.5) .. (428.5,189.5) .. controls (424.36,189.5) and (421,186.14) .. (421,182) -- cycle ;
%Straight Lines [id:da8151640610728399] 
\draw    (340,134) -- (290,134) ;
%Straight Lines [id:da641749174150238] 
\draw  [dash pattern={on 4.5pt off 4.5pt}]  (398,134) -- (352,134) ;
%Shape: Circle [id:dp8482777374702961] 
\draw  [fill={rgb, 255:red, 255; green, 255; blue, 255 }  ,fill opacity=1 ] (391,134) .. controls (391,130.13) and (394.13,127) .. (398,127) .. controls (401.87,127) and (405,130.13) .. (405,134) .. controls (405,137.87) and (401.87,141) .. (398,141) .. controls (394.13,141) and (391,137.87) .. (391,134) -- cycle ;
%Shape: Circle [id:dp9491523158896493] 
\draw  [fill={rgb, 255:red, 255; green, 255; blue, 255 }  ,fill opacity=1 ] (333,134) .. controls (333,130.13) and (336.13,127) .. (340,127) .. controls (343.87,127) and (347,130.13) .. (347,134) .. controls (347,137.87) and (343.87,141) .. (340,141) .. controls (336.13,141) and (333,137.87) .. (333,134) -- cycle ;
%Shape: Circle [id:dp4463288058564594] 
\draw  [fill={rgb, 255:red, 255; green, 255; blue, 255 }  ,fill opacity=1 ] (283,134) .. controls (283,130.13) and (286.13,127) .. (290,127) .. controls (293.87,127) and (297,130.13) .. (297,134) .. controls (297,137.87) and (293.87,141) .. (290,141) .. controls (286.13,141) and (283,137.87) .. (283,134) -- cycle ;
%Rounded Rect [id:dp4897771412811661] 
\draw  [dash pattern={on 0.84pt off 2.51pt}] (264,96) .. controls (264,87.16) and (271.16,80) .. (280,80) -- (588,80) .. controls (596.84,80) and (604,87.16) .. (604,96) -- (604,144) .. controls (604,152.84) and (596.84,160) .. (588,160) -- (280,160) .. controls (271.16,160) and (264,152.84) .. (264,144) -- cycle ;
%Shape: Circle [id:dp052548273287417735] 
\draw  [fill={rgb, 255:red, 255; green, 255; blue, 255 }  ,fill opacity=1 ] (440,103.5) .. controls (440,99.36) and (443.36,96) .. (447.5,96) .. controls (451.64,96) and (455,99.36) .. (455,103.5) .. controls (455,107.64) and (451.64,111) .. (447.5,111) .. controls (443.36,111) and (440,107.64) .. (440,103.5) -- cycle ;
%Shape: Circle [id:dp08356894157877626] 
\draw  [fill={rgb, 255:red, 255; green, 255; blue, 255 }  ,fill opacity=1 ] (470,182) .. controls (470,177.86) and (473.36,174.5) .. (477.5,174.5) .. controls (481.64,174.5) and (485,177.86) .. (485,182) .. controls (485,186.14) and (481.64,189.5) .. (477.5,189.5) .. controls (473.36,189.5) and (470,186.14) .. (470,182) -- cycle ;
%Shape: Circle [id:dp5573457450265932] 
\draw  [fill={rgb, 255:red, 255; green, 255; blue, 255 }  ,fill opacity=1 ] (489,103.5) .. controls (489,99.36) and (492.36,96) .. (496.5,96) .. controls (500.64,96) and (504,99.36) .. (504,103.5) .. controls (504,107.64) and (500.64,111) .. (496.5,111) .. controls (492.36,111) and (489,107.64) .. (489,103.5) -- cycle ;
%Shape: Circle [id:dp6273929243406725] 
\draw  [fill={rgb, 255:red, 255; green, 255; blue, 255 }  ,fill opacity=1 ] (561,103.5) .. controls (561,99.36) and (564.36,96) .. (568.5,96) .. controls (572.64,96) and (576,99.36) .. (576,103.5) .. controls (576,107.64) and (572.64,111) .. (568.5,111) .. controls (564.36,111) and (561,107.64) .. (561,103.5) -- cycle ;
%Shape: Circle [id:dp023400383484048493] 
\draw  [fill={rgb, 255:red, 255; green, 255; blue, 255 }  ,fill opacity=1 ] (541,181.5) .. controls (541,177.08) and (544.58,173.5) .. (549,173.5) .. controls (553.42,173.5) and (557,177.08) .. (557,181.5) .. controls (557,185.92) and (553.42,189.5) .. (549,189.5) .. controls (544.58,189.5) and (541,185.92) .. (541,181.5) -- cycle ;
%Straight Lines [id:da22430230114313598] 
\draw  [dash pattern={on 0.84pt off 2.51pt}]  (539,105) -- (519,105) ;
%Straight Lines [id:da030525780499015887] 
\draw  [dash pattern={on 0.84pt off 2.51pt}]  (515,187.5) -- (495,187.5) ;
%Rounded Rect [id:dp22541086300634794] 
\draw  [dash pattern={on 0.84pt off 2.51pt}] (382,176.8) .. controls (382,171.39) and (386.39,167) .. (391.8,167) -- (582.2,167) .. controls (587.61,167) and (592,171.39) .. (592,176.8) -- (592,206.2) .. controls (592,211.61) and (587.61,216) .. (582.2,216) -- (391.8,216) .. controls (386.39,216) and (382,211.61) .. (382,206.2) -- cycle ;

% Text Node
\draw (440,85) node [anchor=north west][inner sep=0.75pt]    {$a_{1}$};
% Text Node
\draw (490,85) node [anchor=north west][inner sep=0.75pt]    {$a_{2}$};
% Text Node
\draw (560,85) node [anchor=north west][inner sep=0.75pt]    {$a_{k}$};
% Text Node
\draw (425,192) node [anchor=north west][inner sep=0.75pt]    {$b_{1}$};
% Text Node
\draw (475,192) node [anchor=north west][inner sep=0.75pt]    {$b_{2}$};
% Text Node
\draw (545,192) node [anchor=north west][inner sep=0.75pt]    {$b_{k}$};
% Text Node
\draw (285,115) node [anchor=north west][inner sep=0.75pt]    {$v_{1}$};
% Text Node
\draw (380,115) node [anchor=north west][inner sep=0.75pt]    {$v_{n}{}_{-2k}$};
% Text Node
\draw (335,115) node [anchor=north west][inner sep=0.75pt]    {$v_{2}$};
% Text Node
\draw (225,86) node [anchor=north west][inner sep=0.75pt]   [align=left] {Alice};
% Text Node
\draw (345,170) node [anchor=north west][inner sep=0.75pt]   [align=left] {Bob};

\end{tikzpicture}
    \caption{Reduction from \textsc{Pointer Chasing} to \textsc{Depth First Pointer Chasing}.}
    \label{fig:DFPC}
\end{figure}

The graph is a tree $T$ on $n$ vertices, composed of a path $(v_1,\dots,v_{n-2k})$, and $2k$ leaves $v_{n-2k+1}, \dots, v_n$, all adjacent to $v_{n-2k}$. Node $v_1$ is called the root, and node $v_{n-2k}$ is said central. Note that the ordering obtained by taking $\mbox{DFS}(v_i)=i$ is a depth-first search of~$T$, rooted at $v_1$. 
The root $v_1$ is given value $x_0$ as input. If $i \leq n - 2k$, then function~$f_i$ is merely the identity function~$f$ (i.e., $f(x)=x$ for all~$x$). 
For every $j \in [k]$, let $a_j  = v_{n-k+2j-1}$, and $b_j = v_{n-k+2j}$. All nodes $b_j$ get as input the function~$f_B$, and all nodes $a_j$ get the function~$f_A$. Observe that the output of \textsc{Depth First Pointer Chasing} on this instance is precisely the same as the output of the initial instance of \textsc{Pointer Chasing}. Indeed,  $f_{n-2k} \circ f_{n-2k - 1} \circ \cdots \circ f_1$ is the identity function, and  the sequence $f_n \circ f_{n-1} \circ \cdots \circ f_{n-2k+2} \circ f_{n-2k+1}$ alternates nodes of ``type'' $a_j$ with nodes of ``type'' $b_j$, for decreasing values of $j \in [k]$, and thus corresponds to $f_A \circ f_B \circ \cdots \circ f_A \circ f_B$, where the pair $f_A \circ f_B$ is repeated $k$ times, exactly as in problem \textsc{Pointer Chasing}.

We can now apply Round-Efficient Simulation Lemma. Let 
Alice control all vertices $a_j$, for all $j \in [k]$, and vertices $v_1,\dots,v_{n-2k}$. Let Bob control vertices $b_j$, for all $j \in [k]$. See Fig.~\ref{fig:DFPC}. Note that Alice and Bob can construct the subgraph that they control, based only on their input in the considered \PC{} instance, and they both now value $k$. 

\begin{claim}
If there exists a CONGEST algorithm $\mathcal{A}$ for \DFPC{} on $n$-node graphs performing in $R_{\mathcal{A}}$ rounds with node-activation smaller than~$2k$, then \textsc{Pointer Chasing} can be solved by a two-party protocol $\mathcal{P}$ in less than $2k$ rounds, with communication complexity $O(k^4 \log n \log R_{\mathcal{A}})$ bits.
\end{claim}

The claim directly follows from Lemma~\ref{lem:simu2}. Indeed, by construction, $\partial V_A = 1$ and $\partial V_B = k$. Since we assumed $\NA(\mathcal{A}) < 2k$, the two-way protocol $\mathcal{P}$ provided by Lemma~\ref{lem:simu2} solves the  \PC{} instance in less than $2k$ rounds, and uses $O(k^4 \log n \log R_{\mathcal{A}})$ bits.  

By Lemma~\ref{lem:PointerChasing}, we must have $k^4 (\log n +\log R_{\mathcal{A}}) \in \Omega(n - k \log n)$. Therefore, if our CONGEST algorithm $\mathcal{A}$ has polynomially many rounds, we must have  $k \in \Omega\left(\UpB\right)$. Since our graph has maximum degree $\Delta = 2k+1$, the conclusion follows. 
\qed    
\end{proof}

%%%%%%%%%%%%%%%%%%%%%%%%%%%%%%%%%%%%%
\section{Conclusion}
%%%%%%%%%%%%%%%%%%%%%%%%%%%%%%%%%%%%%

In this paper, we have mostly focused on the round complexity of (deterministic) \emph{frugal} algorithms solving general graph problems in the LOCAL or CONGEST model. It might be interesting to consider specific classical problems. As far as ``local problems'' are concerned, i.e., for locally checkable labeling (LCL) problems, we have shown that MIS and $(\Delta+1)$-coloring admit frugal algorithms with polynomial round complexities. It is easy to see, using the same arguments, that problems such as maximal matching share the same properties. It is however not clear that the same holds for $(2\Delta-1)$-edge coloring. 

\begin{openprob}
Is there a (node or edge) frugal algorithm solving $(2\Delta-1)$-edge-coloring with round complexity $O(\mbox{\rm poly}(n))$ in the CONGEST model? 
\end{openprob}

In fact, it would be desirable to design frugal algorithms with sub-polynomial round complexities for LCL problems in general. In particular: 

\begin{openprob}
Is there a (node or edge) frugal algorithm solving {\sc mis} or $(\Delta+1)$-coloring with round complexity $O(\mbox{\rm polylog}(n))$ in the LOCAL model? 
\end{openprob}

The same type of questions can be asked for global problems. In particular, it is known that MST has no ``awake frugal'' algorithms, as MST has awake complexity $\Omega(\log n)$, even in the LOCAL model. In contrast, frugal algorithms for MST do exist as far as node-activation complexity is concerned. The issue is about the round complexities of such algorithms. 

\begin{openprob}
Is there a (node or edge) frugal algorithm solving {\sc mst} with round complexity $O(\mbox{\rm poly}(n))$ in the CONGEST model? 
\end{openprob}

Another intriguing global problem is depth-first search ({\sc dfs}), say starting from an identified node. This can be performed by an edge-frugal algorithm performing in a linear number of rounds in CONGEST. However, it is not clear whether the same can be achieved by a node-frugal algorithm. 

\begin{openprob}
Is there a node-frugal algorithm solving {\sc dfs} with round complexity $O(\mbox{\rm poly}(n))$ in the CONGEST model? 
\end{openprob}

Finally, we have restricted our analysis to \emph{deterministic} algorithms, and it might obviously be worth considering \emph{randomized} frugal algorithms as well.  

%%%%%%%%%%%%%%%%%%%%%%%%%%%%%%%%%%%%%
\bibliographystyle{splncs04}
\bibliography{biblioActivation}
%%%%%%%%%%%%%%%%%%%%%%%%%%%%%%%%%%%%%

\paragraph{Acknowledgements.} The authors are thankful to Benjamin Jauregui for helpful discussions about the sleeping model.

\end{document}